\newtheorem{definition}{Definition}
\newtheorem{theorem}{Theorem}
\renewcommand\thesection{\arabic{section}}
\renewcommand{\thesubsection}{\Alph{subsection}.}
\renewcommand{\thesubsubsection}{\arabic{subsubsection})}
\newcommand{\TrinityI}{Trinity-\uppercase\expandafter{\romannumeral1}\xspace}
\newcommand{\TrinityII}{Trinity-\uppercase\expandafter{\romannumeral2}\xspace}
\begin{document}
%
\title{Trinity: A Scalable and Forward-Secure DSSE for Spatio-Temporal Range Query}
%
%
%
\author{Zhijun~Li, 
        Kuizhi~Liu, 
        Minghui~Xu,~\IEEEmembership{Member,~IEEE,}
        Xiangyu~Wang,~\IEEEmembership{Member,~IEEE,}
        Yinbin~Miao,~\IEEEmembership{Member,~IEEE,}
        Jianfeng~Ma,~\IEEEmembership{Member,~IEEE,}
        and
        Xiuzhen~Cheng,~\IEEEmembership{Fellow,~IEEE,}
     
\thanks{Zhijun Li, Minghui Xu,  and Xiuzhen Cheng are with the School of Computer Science and Technology, Shandong University, China, Qingdao 266237, China. (e-mail: richikun2014@gmail.com, mhxu@sdu.edu.cn,  xzcheng@sdu.edu.cn).}
\thanks{Kuizhi Liu, Xiangyu Wang, Yinbing Miao, and Jianfeng Ma are with the School of Cyber Engineering, Xidian University, Xi'an 710071, China. (e-mail: ighxiy@163.com, wangxiangyu01@xidian.edu.cn, ybmiao@xidian.edu.cn, jfma@mail.xidian.edu.cn).}
\thanks{Corresponding author: Minghui Xu}

}

%
%

\markboth{IEEE TRANSACTIONS ON INFORMATION FORENSICS AND SECURITY}%
{Li \MakeLowercase{\textit{$et\ al.$}}: Bare Demo of IEEEtran.cls for IEEE Journals}
%



\maketitle

\begin{abstract}
Cloud-based outsourced Location-based services have profound impacts on various aspects of people's lives but bring security concerns. Existing spatio-temporal data secure retrieval schemes have significant shortcomings regarding dynamic updates, either compromising privacy through leakage during updates (forward insecurity) or incurring excessively high update costs that hinder practical application. 
Under these circumstances, we first propose a basic filter-based spatio-temporal range query scheme \TrinityI that supports low-cost dynamic updates and automatic expansion. Furthermore, to improve security, reduce storage cost, and false positives, we propose a forward secure and verifiable scheme \TrinityII that simultaneously minimizes storage overhead. A formal security analysis proves that \TrinityI and \TrinityII are Indistinguishable under Selective Chosen-Plaintext Attack (IND-SCPA). Finally, extensive experiments demonstrate that our design \TrinityII significantly reduces storage requirements by 80\%, enables data retrieval at the 1 million-record level in just 0.01 seconds, and achieves 10 $\times$ update efficiency than state-of-art.
 
\end{abstract}

\begin{IEEEkeywords}
Location-based services, spatio-temporal data, dynamic update, forward-secure.
\end{IEEEkeywords}

%
\IEEEpeerreviewmaketitle

\section{Introduction}
%
%
%
%
\IEEEPARstart {T}{he} popularity of smart devices has accelerated Location-Based Services (LBS), which retrieve and sort information based on user location and requests, then provide recommendations to users. 
In traditional spatial LBS, users can ask if a stationary object sits within a specific area. However, how do users track the movement patterns of dynamic objects such as vehicles? That is where spatio-temporal LBS comes in, stepping up from mere spatial snapshots to track and query objects across both space and time. 

For spatio-temporal LBS service providers, outsourcing spatio-temporal data to the cloud is a practical method way of improving the service quality and reducing local operating costs. However, outsourcing data to third parties inevitably brings security and privacy issues. For example, in 2018, fitness tracking app Strava published a global heat map that visualized the activity of its users, inadvertently revealing the locations and activity time of military personnel at sensitive facilities around the world. Therefore, Dynamic Searchable Symmetric Encryption (DSSE) \cite{stefanov2014practical} has become a treatment to this concern, as it allows clients with specific keys to query and update ciphertext without revealing sensitive data. However, almost all DSSE schemes allow for certain information leakage in exchange for efficiency \cite{cao2019protecting,huang2021netr,zhu2021privacy,wang2022quickn,guo2022search,yang2022lightweight,miao2023efficient}. Therefore, attackers can potentially disclose the contents of past queries by inserting new documents, as the server is capable of recognizing matches between these newly inserted documents and previous search queries \cite{zhang2016all}. To avoid such leakage, several studies \cite{kermanshahi2020geometric,li2021secure,wang2022forward,li2023enabling} have focused on introducing forward security to prevent such leaks that occur during DSSE updates.

When applying DSSE to spatio-temporal scenarios, scalability becomes an critical issue. Due to the spatio-temporal nature of the data, the size of the database grows continuously, making scalability a critical concern in system design. 
The scalability of DSSE schemes largely depends on the index structures, including tree-based index structures \cite{kermanshahi2020geometric,etemad2018efficient} and linear index structures \cite{wang2017fastgeo,dou2024dynamic}. Tree-based spatio-temporal range query schemes inherently face scalability challenges \cite{li2019efficient,zheng2020practical}. When data volume grows, we have to frequently expand tree structures by adjusting tree height and reorganizing nodes, which poses scalability limitations \cite{wang2014maple}. 

Compared to tree-based schemes, filter-based ones are more scalable based on an array-based index structure \cite{miao2023efficient}. Furthermore, the filters demonstrate superior space efficiency compared to the R-trees \cite{cui2019geo}. These advantages have led to the increasing adoption of filters in spatio-temporal range query schemes. Filter-based DSSE for spatio-temporal range query work by transforming high-dimensional spatio-temporal data into one-dimensional representations stored in filters. By leveraging a filter, which maps elements to multiple locations using hash functions, queries can be executed through simple bit operations with low computational complexity \cite{cui2019geo}. This approach effectively utilizes the filters' fast lookup capability to efficiently determine element existence and process spatio-temporal range queries. Many DSSE schemes are developed based on bloom filter  \cite{cui2019geo,wang2021enabling,wang2022forward,zhang2022efficient,miao2023efficient}, there still remain challenges that need to be addressed further.  

\subsection{Challenges} 

\textbf{Challenge~I: How to construct a filter that enables flexible and efficient element deletion.} Existing range query schemes \cite{cui2019geo,wang2021enabling,zhu2021privacy,zhang2022efficient} based on the bloom filter do not support deletion, because the bloom filter uses a single-digit set to represent the existence of an element. Each bit of the bloom filter is not exclusive, and multiple elements may share the same bit \cite{almeida2007scalable}.
Therefore, if a bit corresponding to an element is deleted, it may negatively affect other elements that share the same bit \cite{pandey2021vector}. For example, if the two elements “seafood" and “Japanese cuisine" happen to overlap in some bit positions. Setting the bit position corresponding to the “seafood" element to 0 for deletion can cause the “Japanese cuisine" element to be accidentally deleted. Although there are some none-filter DSSE schemes \cite{kermanshahi2020geometric,wang2022forward} that support deletion, our experiment results show that their efficiency remains a concern and they do not apply to filter-based schemes.

\textbf{Challenge~II: How can the scalability of forward-secure filters be improved.} Existing forward-secure range query filters \cite{wang2021enabling,zhu2021privacy,wang2022forward} face scalability problems. As the amount of spatio-temporal data increases over time, the probability of hash collisions increases, which affects the query efficiency and false positive rate \cite{zhang2022efficient}. Especially when the number of data entries in the database reaches its maximum capacity, performing the add operation will cause the database to crash or rebuild. 
Moreover, the bit group size of a filter is static and cannot be modified once set \cite{mullin1983second}. Expanding the size would require remapping existing data, which is impractical \cite{almeida2007scalable}. While initially overprovisioning filter capacity might seem viable, it results in substantial wasted space. Additionally, assuming consistent data input is unrealistic. Periods of inactivity would further amplify space inefficiency.

\textbf{Challenge~IIIs: How to overcome the trade-off between saving storage and minimizing false positives.} Filter-based schemes \cite{li2022adaptively,li2023vrfms,tong2023verifiable} need free space to keep a low false positive rate (FPR). For example, the FFR of widely sued bloom filters is given by $\epsilon = \left ( 1-e^{-k\cdot \frac{n}{m} } \right ) ^{k} $, where $k$ is the number of hash functions, $m$ is the length of the filter, and $n$ is the number of inserted elements \cite{fan1998summary}. 
To achieve a practical FPR of 0.01$\%$, a balance between accuracy and efficiency is sought \cite{almeida2007scalable}. This configuration inherently sets the ratio $\frac{m}{n} =\frac{k} {\ln 2}$, where $k=-\frac{\ln p }{\ln 2}$. This fixed ratio imposes a fundamental design constraint, requiring $\frac{m}{n}\geq 20$ to maintain the desired FPR. In other words, at least 19 times the number of inserted elements in free space is necessary. Overcoming this trade-off is crucial for improving filter-based DSSE.

\begin{table*}[htbp]
\centering
\caption{Comparison With Prior Works}
\label{comparisions}
\setlength{\tabcolsep}{3mm}
\begin{threeparttable}
\begin{tabular}{lccccccc}
\toprule
Schemes & \makecell{Cryptographic\\Primitive} & \makecell{Dynamic\\Update} & \makecell{Forward\\Security} & \makecell{High\\Efficiency} & \makecell{Spatio-\\Temporal} & \makecell{Scalability} & \makecell{Security \\Model}\\
\midrule

$\mathsf{DSSE}_{\mathsf{SKQ}}$\cite{wang2022forward} & ASHE & \checkmark & \checkmark & \checkmark & \ding{53}& \ding{53} & IND-CPA \\
SKSE-\uppercase\expandafter{\romannumeral 2}\cite{wang2021enabling} & HVE & \ding{53} & \ding{53} & \checkmark & \ding{53}& \ding{53} & IND-SCPA \\
GRS-\uppercase\expandafter{\romannumeral 2}\cite{kermanshahi2020geometric} & ASHE & \checkmark & \checkmark & \ding{53} & \ding{53}& \ding{53} & IND-CPA \\
Trinity-\uppercase\expandafter{\romannumeral 1} & SHVE & \checkmark & \ding{53} & \checkmark & \checkmark & \ding{53}& IND-SCPA \\
Trinity-\uppercase\expandafter{\romannumeral 2} & SHVE & \checkmark & \checkmark & \checkmark & \checkmark & \checkmark &IND-SCPA \\
\bottomrule
\end{tabular}
\begin{tablenotes}
\footnotesize
\item \textbf{Notes.} ASHE stands for Additive Symmetric Homomorphic Encryption. ASPE stands for Asymmetric Scalar Product-Preservation Encryption. HVE stands for Hidden Vector Encryption. SHVE stands for Symmetric-key Hidden Vector Encryption.
\end{tablenotes}
\end{threeparttable}
\end{table*}


\textbf{Our Contributions}. We introduce two novel DSSE schemes tailored for spatio-temporal range queries.
\begin{itemize}
    \item \textbf{\TrinityI}. A fundamental DSSE construction that supports dynamic and efficient element updates, including addition and deletion. \TrinityI offers a performance boost over state-of-the-art methods, achieving up to a 20$\times$ speedup. And \TrinityI adaptively expands its index structure, so we can keep a negligible false positive rate and low latency on search and update for the filter. Besides, our fundamental DSSE construction is secure against IND-SCPA. 
    \item \textbf{\TrinityII}. Building upon \TrinityI, this scheme provides additional high accuracy, and forward security. \TrinityII offers a verification function that strikes a balance between storage efficiency and low false positives. It reduces storage requirements by 80\% while enabling data retrieval at the 1 million-record level in just 0.01 seconds, and offers enhanced forward security against file injection attacks. \TrinityII still offers a performance boost over state-of-the-art methods, achieving up to a 10$\times$ speedup on update latency.
\end{itemize}

\hfill 
\section{Related Work}
In this section, we introduce some related work. To ensure a more impartial and lucid presentation, we list Trinity with previous works in Table~\ref{comparisions}.
\par  \textbf{DSSE.} DSSE is a cryptographic technique that enables clients to securely outsource encrypted databases to servers while still maintaining the ability to perform search and update operations on encrypted data \cite{stefanov2014practical,cash2015leakage,kim2017forward, 10621113, etemad2018efficient,zuo2020forward,wang2022forward}. While DSSE provides efficient access to encrypted databases, it inevitably incurs some information leakage, particularly during update operations, where sensitive data can potentially be exposed, thus compromising privacy. The concept of forward security, introduced by Stefanov $et\ al.$ \cite{stefanov2014practical}, aims to address this issue by ensuring that updates to the encrypted database do not reveal information about previous queries or data. Zhang $et\ al.$ \cite{zhang2016all} further highlighted the vulnerability of DSSE to leakage attack, specifically file injection tactics, where an attacker strategically inserts a limited number of documents into the encrypted database to deduce search queries. Following this finding, Bost \cite{bost2016ovarphiovarsigma} provided a formal definition of forward security and proposed a concrete scheme that incorporates forward security in DSSE. Since then, several forward-secure DSSE schemes have been proposed \cite{etemad2018efficient,song2018forward,li2021towards,guo2023forward}, with a focus on improving security, efficiency, and functional diversity.



\par \textbf{Filter-based DSSE.} 
The common drawback of filter-based search DSSE schemes \cite{wang2021enabling, zhu2021privacy,li2022adaptively, zhang2022efficient, miao2023efficient, li2023vrfms, tong2023verifiable}, particularly bloom filter-based schemes, is that they cannot support deletion. These filters are designed to efficiently test membership in a set while allowing for false positive results, meaning they can indicate that an element is present in the set when it is not. This characteristic is inherent in their probabilistic nature, which prioritizes space efficiency over absolute accuracy. Wang $et\ al.$ \cite{wang2015circular} introduced a pioneering privacy-preserving circular range search scheme; however, the search token size scales with the square of the search radius R, posing scalability challenges. To address these issues, Wang $et\ al.$ \cite{wang2021enabling} ingeniously leveraged the Hilbert curve, effectively reducing queries of the two-dimensional spatial range to one-dimensional searches, thus significantly enhancing retrieval speed. Although this approach represents a notable improvement, it still lacks support for dynamic updates and is limited to a fixed data scale. Similarly, the non-scalable structure employed in \cite{zhang2022efficient, miao2023efficient} requires significant space overhead to maintain a low False Positive Rate (FPR), further highlighting the limitations of these methods.

\textbf{None-filter DSSE.} Balancing efficiency and security has consistently presented a challenge in privacy-preserving LBS \cite{kerschbaum2014optimal,grubbs2017leakage,wang2022quickn,guo2022search}. Various effective privacy-preserving approaches have been proposed, including the use of Order-Revealing Encryption (ORE) \cite{kerschbaum2014optimal}, and a variant of Order-Preserving Encryption (OPE) \cite{wang2022quickn}. However, recent research \cite{grubbs2017leakage} has highlighted severe security vulnerabilities in property-preserving encryption schemes such as OPE and ORE. Cui $et\ al.$ \cite{cui2019geo} contributed a privacy-preserving Boolean Range Query (BRQ) solution, which was later found to be vulnerable to Ciphertext-Only Attacks (COA) \cite{li2019insecurity} due to the use of Asymmetric Scalar Product-Preservation Encryption (ASPE) \cite{wong2009secure}.
To address this security issue, Yang $et\ al.$ \cite{yang2022lightweight} proposed an enhanced ASPE scheme that achieves Indistinguishability under Chosen Plaintext Attack (IND-CPA). However, their approach compromised on search efficiency. Kermanshahi \cite{kermanshahi2020geometric} proposed a forward-secure DSSE for spatial range query, but the efficiency of their solution is unacceptable. Wang $et\ al.$ \cite{wang2021enabling} then improved search speed by incorporating a Bloom filter hierarchical tree, but their solution was limited to single-user scenarios due to its symmetric key setting and focused solely on spatial queries, neglecting the temporal dimension.  Li $et\ al.$ \cite{li2023enabling} developed an efficient privacy-preserving spatio-temporal LBS scheme capable of retrieving million-level data in milliseconds.

\section{Preliminaries}
Here, we introduce some crypto primitives and data structures widely used in Trinity including Hilbert curve, bloom filter, and SHVE.
\subsection{Hilbert Curve}
A Hilbert curve is a type of continuous factal space-filling curve that fills a $d$-dimensional area \cite{hilbert1935stetige}. For a $d$-dimensional space, each dimension is uniformly divided into $2^h$ segments, where $h$ denotes the order of the Hilbert curve. The Hilbert curve is constructed by recursively dividing an area into $2^d$ smaller areas and connecting the centers of these smaller areas in a specific order. With the Hilbert curve, any $d$-dimensional spatial range query can be converted into a range query in one-dimensional space of $2^{dh}$ consecutive areas, each represented by a $dh$-bit value. 




\subsection{Quotient Filter}
\par A quotient filter is a variant of the Bloom filter. It can quickly detect whether an element is in a set, indicating that the data definitely do not exist or may possibly exist. The quotient filter employs a space-efficient hash table mechanism, dividing the $p$-bit fingerprint of an element into two distinct segments. The first segment, known as the quotient, consists of the $q$ most significant bits and is used to rapidly determine the “target position" of the element within the filter. The second segment, referred to as the remainder, encapsulates the $r = p - q$ least significant bits and serves to differentiate elements with the same quotient. The original slot where the hash output point's quotient resides is called the \texttt{canonical slot}. A sequence of consecutive slots containing remainders with the same quotient is termed a \texttt{run}. A \texttt{cluster} in a quotient filter is a contiguous sequence of \texttt{run}s, commencing with the first \texttt{run} whose initial fingerprint occupies its \texttt{canonical slot}, and extending until an unoccupied slot is encountered or the another \texttt{run} occupies its \texttt{canonical slot} is identified. Each slot contains a 3-bit counter.
\begin{itemize}
	\item \textbf {Is\_ occupied}. Set to 1 if a slot is occupied correctly, that is, a quotient value corresponds to the slot index.
	\item \textbf {Is\_ continuation}. Set to 1 when a slot is not the start of the \texttt{run}. 
	\item \textbf {Is\_ shifted}. Set to 1 when this slot is not the start of the \texttt{cluster}. It happens when there is an offset between the position where the remainder is stored and the index represented by the quotient associated with the remainder.
\end{itemize}

\begin{figure}[!ht]
	\centering
	\includegraphics[scale=0.3]{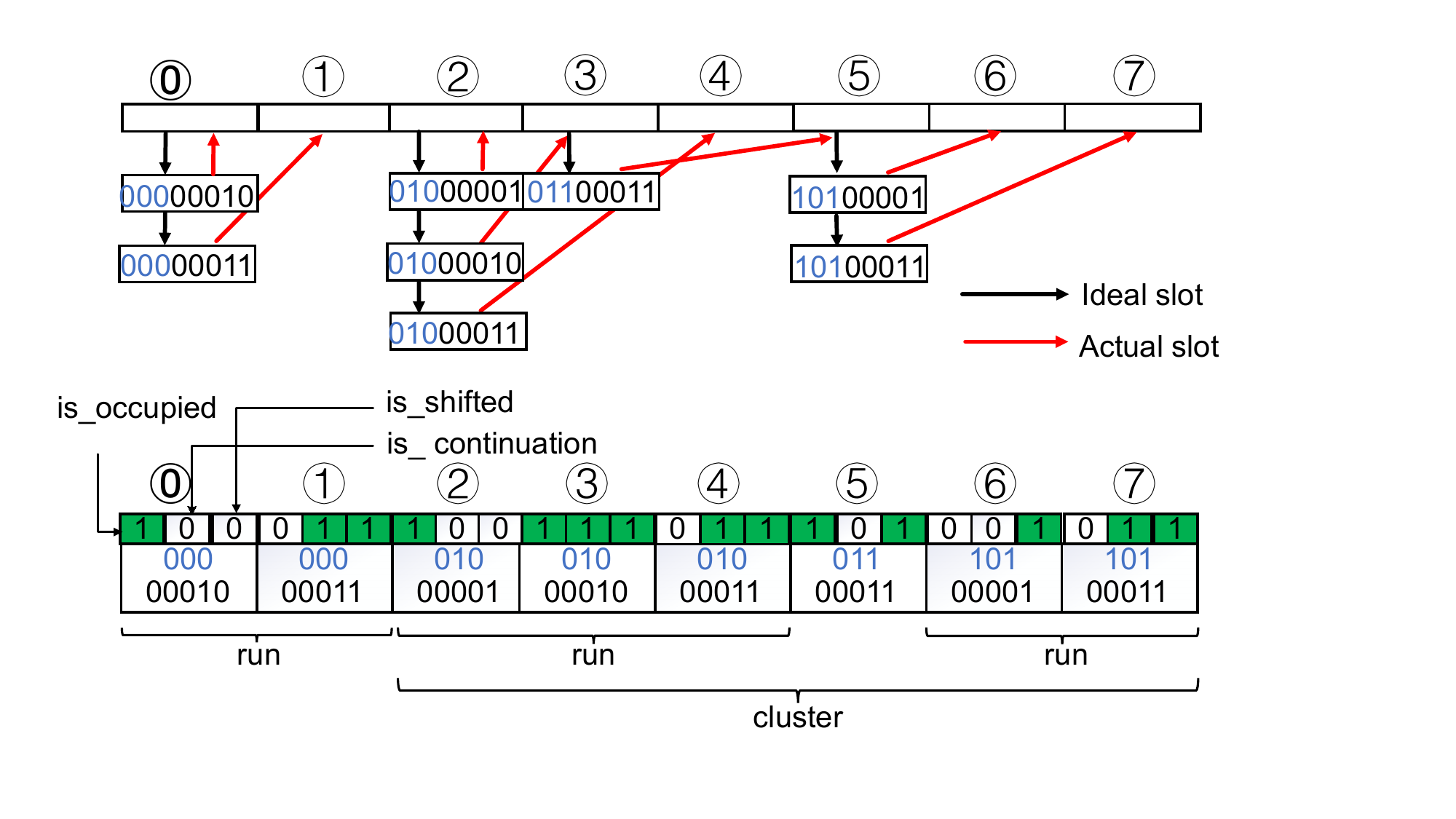}
	\caption{An example of quotient filter}
	\label{quotient filter}
\end{figure}

We provide an example of the quotient filter in Fig.~\ref{quotient filter}. The upper portion depicts a simplified view, where eight fingerprints are categorized into buckets based on their quotient (blue numbers). For example, the first fingerprint 000$\big|$00010, with the quotient of 000, is placed in slot \textcircled{0}. The second fingerprint 000$\big|$00011 ideally belongs to slot \textcircled{0}, but slot \textcircled{0} is occupied, so we shift it right to slot \textcircled{1} (the actual slot of the second fingerprint).
The lower portion shows the actual storage mechanism. Each bucket holds a 3-bit counter and a remainder. In this example, 000$\big|$00010, and 000$\big|$00011 share the same quotient, forming a \texttt{run}. The eight fingerprints finally constitute three \texttt{run}s and one \texttt{cluster}.


\subsection{Symmetric-key Hidden Vector Encryption}

Symmetric-key Hidden Vector Encryption (SHVE) is a lightweight crypto primitive based on Hidden Vector Encryption (HVE). Like HVE, it supports conjunctive, equality , and subset queries on encrypted data. Let $\Gamma$ be a finite field and ‘$\ast$’ be a wildcard symbol not belong to $\Gamma$, $\Gamma_{\ast}=\Gamma \bigcup \{\ast \}$. The SHVE is defined in the subsequent algorithm:
\begin{itemize}
	\item \textbf{SHVE.Setup$(\{0,1\}^{\lambda })$}: On input a security parameter ${\lambda }$ , it returns a master secret key $msk$ and the message space $\mathcal{M}$.

\item \textbf{SHVE.KeyGen$\left(msk,{\bf V}\in \Gamma_{*}^{d}\right)$}: On input a master secret key $msk$ and a predicate vector $\displaystyle\mathbf{v}=~\{v_{1},\ldots,v_{d}\}~\in~\Gamma_{*}^{d}$, it returns the $s$ as a decryption key.
\item \textbf{SHVE.Enc$(msk,\mu \in \mathcal{M},\mathbf{ind} \in \Gamma^{d})$}: On input an index vector $ind$, a master key $msk$ and a message $\mu$, it returns the ciphertext $c$ associated with $(\mathbf{ind},\mu)$.

\item \textbf{SHVE.Query$\left(\mathbf{c},\mathbf{s}\right)$}: On input a ciphertext $c$ associated with $(\mathbf{ind},\mu)$ and a decryption key $s$ corresponding to the predicate vector $v$, it returns $\mu$ if ${\cal P}_{\mathrm{v}}^{\mathrm{SHVE}}({\bf x})=1$; else returns null.
\end{itemize}

\section{Problem Formulation}
In this section, we illustrate the system model and the threat model considered in this paper.
\subsection{System Model} Trinity involves three entities: Cloud Server (CS), Data Owner (DO) and Data Users (DU).
\begin{itemize}
	\item \textsf{Cloud server}. The CS is assumed to have sufficient computing and storage capabilities. The primary responsibilities of the CS include storing the ciphertext and secure index provided by the DO. Upon receiving a search token from a DU, the CS processes the token and returns the results to the DU.
	\item \textsf{Data owner}. The DO encrypts the files and generates the corresponding security index, which is uploaded to the CS. Additionally, the DO generates a secret key and securely distributes it to the authorized DU through a secure channel.  
	\item \textsf{Data users}. The DU obtains the key from the DO and creates search tokens corresponding to the spatio-temporal objects being queried. These tokens are then sent to the CS. DU receives the results from the CS after the search is completed.
	
\end{itemize}
A spatio-temporal database ${\bf DB}$ is defined as: ${\bf DB}=\{{\bf p}_{i},\ {\bf ind}_{i}\}_{i=1}^{N}$, ${\bf p}_{i}=(x_i,y_i,z_i)$ is a space-time node, ${\bf ind}_{i}$ is a file identifier, and $N$ is the size of the quotient filter.
\begin{definition}
\label{3}
For an encrypted spatio-temporal database ${\bf EDB}$ and a search token ST, Trinity is to search a subset $\{{\bf ind}_{i}^{\ast}\}_{i=1}^{c}$ such that $\forall 1\leq i\leq c$, $\mathcal{P}_{i}^{*}\in\mathcal{R}_{\mathcal{Q}}$, where $\mathcal{P}_{i}^{*}$ represents the minimum set of prefix elements that collectively cover the
entire range query $\mathcal{R}_{\mathcal{Q}}$.

\end{definition}
\begin{definition}
\label{4}
The Trinity is a DSSE scheme consisting three algorithms described as follows.
\begin{itemize}
	\item \textup{\textbf{Setup}$(\{0,1\}^{\lambda })\rightarrow(K_{\Sigma},\sigma;\mathrm{EDB})$}: On input a security parameter ${\lambda }$, it returns a secret key set $K_{\Sigma}$, a state $\sigma$ and the encrypted database $\mathrm{EDB}$.

\item \textup{\textbf{Search}} $(\mathrm{Q},K_{\Sigma},\sigma;\mathrm{EDB})\rightarrow \mathrm{R}$: On input a query $\mathrm{Q}$, a secret key set $K_{\Sigma}$, a state $\sigma$ and the encrypted database $\mathrm{EDB}$, the client sends a search request to the server, and the server returns the result $\mathrm{R}$ after searching all over the $\mathrm{EDB}$.
\item \textup{\textbf{Update}}$(K_{\Sigma},O,up,\sigma;\mathrm{EDB})\rightarrow (K_{\Sigma},\sigma^{\prime};\mathrm{EDB}^{\prime})$: On input $(K_{\Sigma},O,up,\sigma;\mathrm{EDB})$, where $O=\{p,ind\}$ and $up\in\{add,del\}$. $up$, $add$ and $del$ denote update, addition, and deletion, respectively. The server adds or deletes the object $O$ from $\mathrm{EDB}$.
\end{itemize}

\end{definition}

\subsection{Threat Model} 

We assume that the Data Owner (DO) and Data User (DU) are trustworthy. The Cloud Server (CS), however, is considered honest-but-curious, meaning it faithfully executes the protocol but may attempt to extract information from encrypted data. 
To ensure the security of the DSSE system (e.g., Trinity), the adversary, represented as $\mathcal{A}$, should not be able to gain any meaningful information from the encrypted databases or queries. This security property is formalized through a real-world vs. ideal-world game-based approach. 
Let $\mathcal{L} = \{\mathcal{L}^{Stp}, \mathcal{L}^{Srch}, \mathcal{L}^{Updt}\}$ denote the leakage function that captures the information an adversary can potentially obtain from the \textbf{Setup}, \textbf{Search}, and \textbf{Update} phases. The formal definitions of these games are as follows:

\begin{itemize}
 \item The real game $Game_{\mathcal{A}}^{\mathcal{R}}(\lambda)$: $\mathcal{A}$ selects a database DB and generates the EDB via \textup{\textbf{Setup}}.  Subsequently, $\mathcal{A}$ adaptively runs \textup{\textbf{Search}} or \textup{\textbf{Update}}. Throughout the experiment, $\mathcal{A}$ maintains full observational access to the real operational transcripts. Upon conclusion of the adaptive query phase, $\mathcal{A}$ outputs $b \in \lbrace 0,1\rbrace$.
 \item The ideal game $Game_{\mathcal{A}}^{\mathcal{S}}(\lambda)$: $\mathcal{A}$ selects a database DB and generates the EDB via simulator $\mathcal {S}(\mathcal {L}^{{Stp}}(\mathsf {DB}))$.  Subsequently, $\mathcal{A}$ adaptively runs $\mathcal {S}(\mathcal {L}^{{Srch}})$ or $\mathcal {S}(\mathcal {L}^{{Updt}})$. Throughout the experiment, $\mathcal{A}$ maintains full observational access to the simulated operational transcripts. Upon conclusion of the adaptive query phase, $\mathcal{A}$ outputs $b \in \lbrace 0,1\rbrace$.
 \par If there is an efficient simulator $\mathcal{S}$ such that:

\end{itemize}
If $\mathcal{A}$ cannot distinguish between the real-world game $Game_{\mathcal{A}}^{\mathcal{R}}(\lambda)$ and the ideal-world game $Game_{\mathcal{A}}^{\mathcal{S}}(\lambda)$, then the system is considered secure, meaning no information beyond the specified leakage function is leaked. 

\begin{definition}
\label{def1}
A DSSE is $\mathcal {L}$-adaptively secure \cite{curtmola2006searchable} if for any probabilistic polynomial-time (PPT) adversary $\mathcal{A}$, there is an efficient simulator $\mathcal{S}$:
 \begin{equation}
    \nonumber
  \vert Pr[Game_{\mathcal{A}}^{\mathcal{R}}(\lambda)\!\!=\!\!1]-Pr[Game_{\mathcal{A}}^{\mathcal{S}}(\lambda)=1]\vert \leq \mathsf{negl}(\lambda). 
    \end{equation}
\end{definition}

Besides, forward security plays a pivotal role in safeguarding DSSE schemes from leakage-abuse attacks. The DSSE scheme is considered 'forward-secure' if there is no connection between an update of encrypted data and any previously performed search results. Formally speaking, 
\begin{definition}
\label{def2}
\par  A $\mathcal {L}$-adaptively secure DSSE is forward-secure \cite{bost2016ovarphiovarsigma} if the update leakage function $\mathcal {L}^{Updt}$ is defined as follows,
	\begin{equation}
    \nonumber
  \mathcal {L}^{Updt}(op,in=(p,ind))=\mathcal {L}^{\prime }(up,in=(ind,c)) ,
    \end{equation}
   
   where $op$ denotes the operation like addition or deletion, $in$ denotes the input, $ind$ represents the document identifier, and $c$ is the number of update files.
 \end{definition}

\subsection{Design Goals} 
The design goals of Trinity are described as follows.
\begin{itemize}
	\item \textbf{Dynamic.} The proposed scheme is designed to be dynamically configurable, enabling document addition and deletion operations.
	\item \textbf{Update-efficient.} The proposed scheme aims to achieve more efficient updates compared to existing forward-secure schemes.
    
    \item \textbf{Scalable.} The proposed scheme aims to be more scalable, which means it can efficiently expand its capacity for a continuous stream of data.  \item \textbf{Verifiable.} The proposed scheme aims to be verifiable, which means there would be no false positives in results and no storage waste to keep false positives minimized.  
        \item \textbf{Privacy- preserving.} The proposed scheme aims to be $\mathcal {L}$-adaptively secure and forward-secure. The system should safeguard sensitive information such as file collections, indexes, and background information of keywords from unauthorized access. Furthermore, our scheme adheres to forward security principles, ensuring that the CS remains oblivious to any association between recent updates and prior search results.
\end{itemize}

\section{Trinity Schemes}

In this section, we first introduce a scalable, update-efficient spatio-temporal range search scheme, \TrinityI, that builds on quotient filter, Hilbert curve and SHVE. Subsequently, we propose a forward-secure, and verifiable scheme Trinity-\uppercase\expandafter{\romannumeral2}. Trinity-\uppercase\expandafter{\romannumeral1} is faster than Trinity-\uppercase\expandafter{\romannumeral2} in search and update latency, but it wastes more storage and lacks forward security.

\subsection{\TrinityI: Basic Trinity Construction} 
\subsubsection{Technique Overview}
We treat spatio-temporal data as three-dimensional data and use the Hilbert curve to reduce it to a one-dimensional form. This technique effectively translates spatio-temporal range queries in a multi-dimensional space into multiple one-dimensional range queries. We denote the Hilbert curve of a point p or a range R as $\mathcal{H}(p)$ and $\mathcal{H}(R)$, respectively. Although it is feasible to encode all elements within a specified range into a QF and test for the presence of an encoded space-time node, the QF's size scales linearly with the number of elements. Notably, range queries typically encompass significantly more elements than individual data objects. To optimize QF size, we must minimize the number of elements involved in range queries. To address this challenge, we employ a prefix membership verification technique introduced by Liu $et\ al.$ \cite{liu2010privacy}.

 Previously mentioned bloom filter-based schemes \cite{li2022adaptively, zhang2022efficient, miao2023efficient, li2023vrfms, tong2023verifiable} suffer from several challenges: deletion capability, scalable structure, and minimizing false positives. In response to these limitations, our approach leverages the quotient filter, and scalable bloom filter technology to resolve those challenges. The quotient filter employs a fingerprint-based storage mechanism, where each element is represented by a unique fingerprint comprising quotient and remainder components, thereby facilitating precise element identification and efficient deletion operations. 

The system's performance inversely correlates with data volume: as data volume increases, both insertion and retrieval operations become less efficient, while false positive rates increase. So we adjust the structure of the quotient filter to make it scalable, i.e. when the amount of inserted data approaches a certain threshold (affecting performance or false positive rate is high), it automatically expands dynamically. We simply borrow one bit from the remainder into the quotient instead of rehashing all elements for the expanding. 
 
 To ascertain the subset relationship between encrypted vector representations, we leverage SHVE for performing privacy-preserving set membership queries within a cryptographically secured computational domain \cite{liang2023privacy}. Specifically, our scheme uses multi-threaded computing technology to speed up SHVE.

\subsubsection{Details of Basic Trinity Construction} 
In this section, we will briefly introduce \textbf{Setup} and \textbf{Search}, focusing on dynamic \textbf{Update}. Here dynamic \textbf{Update} refers to the ability to efficiently modify the data structure that supports the encryption scheme, allowing for the addition and deletion of elements without requiring a complete re-encryption of the dataset. Specially, our scheme uses the murmur hash function and multi-threaded computing SHVE.

\begin{table}
\centering
\label{Notation}
\caption{Notation Description for Trinity}
\resizebox{\linewidth}{!}{
\label{notation}
\begin{tabular}{cl}
\hline
Notations& Descriptions\\
\hline
$QF$& Quotient Filter\\
$\xi, \xi^{\prime}$& random values between (0, 1)\\
$\check{p}$& spatio-temporal object vector\\
$Q$& search query\\
$N$& size of Quotient Filter\\
$OT_{c}$& order token corresponding to the current count\\
$e_{c+1}$& salted token \\
${M_{1\_{c}}},{M_{2\_{c}}}$& salted matrices\\
$p_i$& encoded spatio-temporal data\\
$O$& space-time node\\
$ind_i$&  the i-th file identifier.\\
$H()$& hash function\\
$\mathcal{H}()$& Hilbert curve encode\\
$P()$& prefix of the nodes\\
$\mathcal{P}()$& prefix of the range\\
$e_{i}$& fingerprint of file identifier\\

${{F_{Q}}}$& quotient of fingerprint\\
${{F_{R}}}$& remainder of fingerprint\\
$ST_{non}$& Search token for non-leaf nodes\\
$ST_{L}$& Search token for leaf nodes\\
\hline
\end{tabular}
}
\end{table}
\par \textbf{Setup$(\{0,1\}^{\lambda })\rightarrow(K_{\Sigma},\sigma;\mathrm{EDB})$:} Given a security parameter $\lambda$ and a database DB, it returns a master key $msk$, $t$ hash functions $H=\{{H}_{i}(\cdot)\}_{i=1}^{t}$, and an encrypted database EDB. And DO share the secret key set $K_{\Sigma}=\{msk,H\}$ with the DU. For each space-time node $O_i=\{p_i,ind_i\}$, encoding the spatio-temporal data. \begin{figure}[!h]
	\centering
	\includegraphics[scale=0.39]{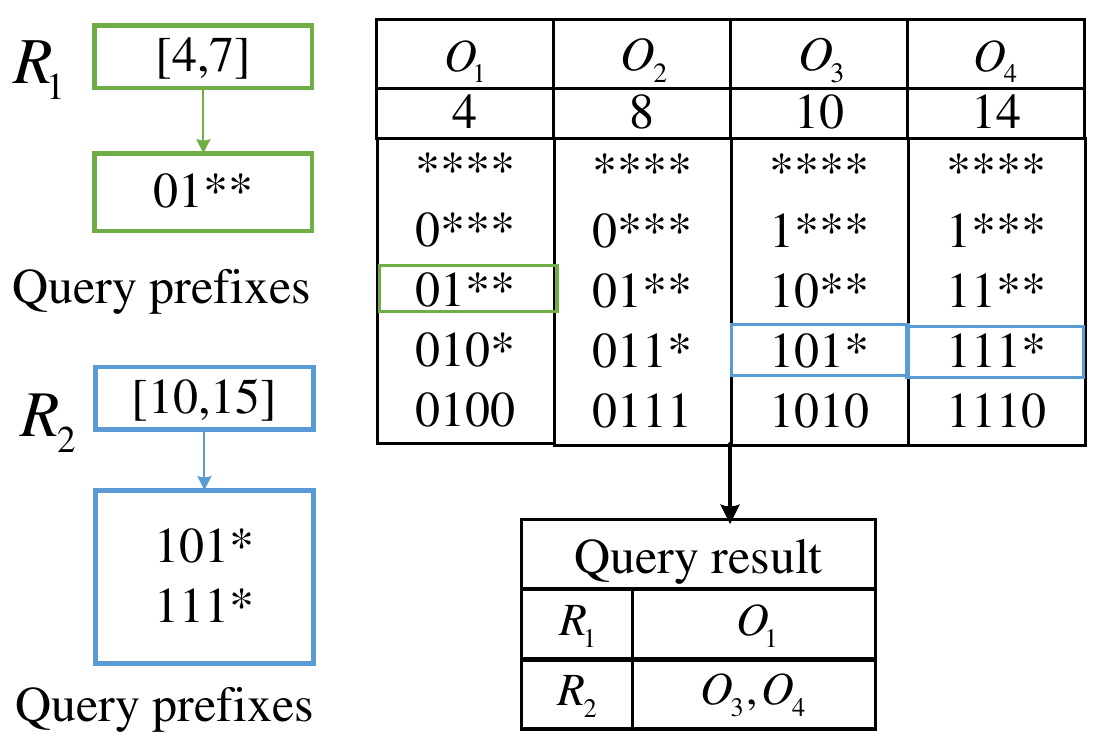}
	\caption{An example of encoded Hilbert curve with prefixes }
	\label{prefix}
\end{figure}
For a given $\omega$-bits data item $X=a_{1}~a_{2}\cdot\cdot\cdot a_{\omega}$, its prefix family is defined as a set of $\omega +1$ elements: ${P}(X)=\{a_{1}\,a_{2}\cdot\cdot\cdot a_{\omega},a_{1}\,a_{2}\cdot\cdot\cdot a_{\omega-1}{*},\cdot\cdot\cdot,a_{1}{*}\cdot\cdot\cdot{*},\ast\ast\cdot\cdot\cdot\ast\}$, where the $i$-th prefix element is $a_{1}\ \ a_{2}\cdot\cdot\cdot a_{\omega-i+1}\ *\cdot\cdot\cdot*$. Given a range $[x_{min}, y_{max} ]$, the query $\mathcal{P}\bigl(\left[x_{m i n},\,x_{m a x}\right]\bigr)$ represents the minimum set of prefix elements that collectively cover the entire range. An item X belongs to the range $[x_{min}, y_{max} ]$ if and only if the intersection of its prefix family $P(X)$ and the query prefix set $\mathcal{P}\bigl(\left[x_{m i n},\,x_{m a x}\right]\bigr)$ is non-empty: $X\in[x_{m i n},x_{m a x}] \Leftrightarrow P(X)\cap \mathcal{P}([x_{m i n},x_{m a x}])\neq \emptyset$. 
\begin{algorithm}
\caption{Setup ($1^\lambda$)}
\label{alg:Setup}
\KwIn{$(\{0,1\}^{\lambda }),\mathrm{DB})$}
\KwOut{$(K_{\Sigma},\sigma;\mathrm{EDB})$}
$\mathrm{Setup(1^{\lambda})}\rightarrow msk;$
\;
Randomly generates $H=\{{H}_{i}(\cdot)\}_{i=1}^{t}$\;
$K_{\Sigma}=\{msk,H\}$\;
 \For {{\rm space-time node} $O_i=\{p_i,ind_i\}$ }{
 $H({P}(\mathcal{H}(p_{i})))\rightarrow e_{i}$\;
 $e_{i}\rightarrow ({{F_{Q}}},{{F_{R}}})$\;

 ${\rm QF}\{H,({{F_{Q}}},{{F_{R}}})\}\rightarrow {\rm QF_i}$\;
 SHVE.Enc$(msk,\textbf{“True"},{\rm QF_i})\rightarrow \rm{C_{SHVE}}(QF_i)$\;
 $\{\rm{C_{SHVE}}({\rm QF_{i}}),ind_i\}_{i=1}^{N}\rightarrow $EDB\;

}

  \textbf{return} $(K_{\Sigma},\sigma;\mathrm{EDB})$ \;
\end{algorithm}
Fig.~\ref{prefix} demonstrates examples of spatio-temporal range queries over a spatio-temporal database utilizing our spatio-temporal data and query encoding methodology. Consider a spatio-temporal database holding four space-time nodes, each encoded via the Hilbert curve and represented as a prefix family. To retrieve objects within region $R_1$, the DU initially encodes the query range as $\{[4, 7]\}$ and subsequently generates the corresponding query prefixes as $\{ 01**\}$. Since $001***$ belongs to the prefix family of ${P}(\mathcal{H}({O}_{1})))$, we conclude that $O_1$ resides within $R_1$. To query objects within $R_2$, the DU similarly encodes the query range as $\{[10, 15]\}$ and constructs the corresponding query prefixes as $\{101*, 111*\}$. We determine that $O_3$ and $O_4$ belong to $R_2$ based on the presence of $101* \in {P}(\mathcal{H}({O}_{3})))$ and $111* \in {P}(\mathcal{H}({O}_{4})))$. We create a quotient filter and store the fingerprints ${{H}}({p}_{i})$ of the space-time node in a trinity form $({{H}}({p}_{i}),{{F_{Q}}},{{F_{R}}})$. Finally, we encrypt all the elements within the quotient filter using $msk$. A formal representation of the \textbf{Setup} phase is displayed in Algorithm~\ref{alg:Setup}.

\par\textbf{Search $(K_{\Sigma},\mathrm{Q},\sigma;\mathrm{EDB})\rightarrow \mathrm{R}$}: Given a query $\mathrm{Q}$, a secret key set $K_{\Sigma}$, a state $\sigma$, and the encrypted database $\mathrm{EDB}$, the DU sends a search request to the server. The CS then returns the result $\mathrm{R}$ after searching  through the $\mathrm{EDB}$. To search for a value in a Quotient Filter, DU first calculates the fingerprint $e_q$, and divides it as quotient $F_Q$ and remainder $F_R$, and encrypts them with $msk$. Then Du sends the search token $\rm{S_{SHVE}}({\rm QF_Q})$ to the CS. The CS checks if the target slot corresponding to ${T\_F_{Q}}$ is occupied. If it is occupied, find the start index of the \texttt{run} corresponding to the quotient $F_Q$. The CS continues the loop when the current slot is a continuation of \texttt{run}. The CS starts from the starting position $s$, compares the remainder one by one. If the remainder is equal to ${{F_{R}}}$, return true. If the remainder is greater than ${{F_{R}}}$, return false. If there is no matched remainder in \texttt{run}, return false. A formal representation of the \textbf{Search} phase is displayed in Algorithm~\ref{alg:Search}.

\begin{algorithm}
 \caption{Search}
 \label{alg:Search}
 \KwIn{$(K_{\Sigma},\mathrm{Q},\sigma;\mathrm{EDB})$}
 \KwOut{$ \mathrm{R} $}
 \textbf{DU:} \\
  $H(\mathcal{P}(\mathcal{H}(R)))\rightarrow e_q$,
 $e_{q}\rightarrow (F_{Q},F_{R})$\;
 ${\rm QF}\{H,(F_{Q},F_{R})\}\rightarrow {\rm {\rm QF_Q}}$\;
 SHVE.KeyGen$(msk,{\rm QF_Q})\rightarrow \rm{S_{SHVE}}({\rm QF_Q})$\;
Send $\rm{S_{SHVE}}({\rm QF_Q})$ to the CS\;
 \textbf{CS:} \\
   \If{${\rm is\_occupied}(T\_F_{Q}) == \textup{false}$} {
            $\Return\ \textup{false}$ \ $/\ast$ {Checks\ if\ the\ element\ at\ the\ quotient\ index\ is\ occupied. If not, it returns false immediately.} $\ast/$\\
        } 
 $ s = {\rm find\_run\_index}({\rm QF}, {{F_Q}}) $;\ $/\ast$ {Finds the starting index of the \texttt{run} associated with the quotient.} $\ast/$\\

\Do{$({\rm is\_continuation}({\rm get\_elem}({\rm QF}, s)))$}{$rem = {\rm get\_remainder}({\rm get\_elem}({\rm QF}, s))$\;
\If  {$rem == {{F_R}}$}
	{  
	$ \Return \ \textup{true}$\;
	}

	\Else{$rem > {{F_R}} $\;
	$ \Return \ \textup{false}$\;
	}
  $s = {\rm incr} ({\rm QF},s)$$/\ast$ {Increments the index to access the next element in the \texttt{run}.}$\ast/$}	 
 $ \Return \ \textup{false}$\;
 
\end{algorithm}

\par \textbf{Update$(K_{\Sigma},O,up,\sigma;\mathrm{EDB})\rightarrow (K_{\Sigma},\sigma^{\prime};\mathrm{EDB}^{\prime})$}: Given $(K_{\Sigma},O,up,\sigma;\mathrm{EDB})$, the CS adds or deletes the object $O$ from the $\mathrm{EDB}$. For both addition and deletion, DO encodes the updated data and uses a hash function to obtain the updated fingerprint $e_a $ or $
e_d$ (for addition or deletion). DO then computes the quotient and remainder of the fingerprint. Do encrypt the quotient and remainder and sends them to the server for preparation to be inserted into the quotient filter. 
\begin{algorithm}
\caption{Update\_Addition}
\label{alg:Addition}
\KwIn{$(K_{\Sigma},O,add,\sigma;\mathrm{EDB})$}
\KwOut{$(K_{\Sigma},\sigma^{\prime};\mathrm{EDB}^{\prime})$}
 \textbf{DO:} \\
  $H({P}(\mathcal{H}(p_{a})))\rightarrow e_{a}$,
$e_{a}\rightarrow ({{F_{Q}}},{{F_{R}}})$\;
 ${\rm QF}\{H,(F_{Q},F_{R})\}\rightarrow {\rm QF_A}$\;
 SHVE.KeyGen$(msk,{\rm QF_A})\rightarrow \rm{S_{SHVE}}({\rm QF_A})$\;
Send $\rm{S_{SHVE}}({\rm QF_A})$ to the CS\;
 \textbf{CS:} \\
${\rm prev} = {\rm get\_elem}({\rm QF}, s)$ \;
${\rm empty} = {\rm is\_empty\_element}({\rm prev}) $\;

\Do{$(!{\rm empty})$}{
    \If{$(!{\rm empty})$}{
        ${\rm prev} = {\rm set\_shifted}({\rm prev})$ \;
        \If{$({\rm is\_occupied\_element}({\rm prev}))$}{
            ${\rm curr} = {\rm set\_occupied\_element}({\rm curr})$ \;
            ${\rm prev} = {\rm clr\_occupied\_element}({\rm prev})$ \;
        }
    }
    ${\rm set\_element}({\rm QF}, s, {\rm curr}) $\;
    ${\rm curr} ={\rm prev}$ ,
    $s = {\rm incr}({\rm QF}, s) $,
    ${\rm prev} = {\rm get\_elem}({\rm QF}, s)$ \;
    ${\rm empty} = {\rm is\_empty\_element}({\rm prev}) $\;
}
\If{${\rm(QF_{i}\_entries \ge \frac{1}{20}QF_{i}\_max\_size)}$}{
    $\mathrm{i} ++$;\ $/\ast$ {Expand the original ${\rm QF_{i}}$ to ${\rm QF_{i+1}}$ when ${\rm QF_{i}}$ exceeds 5\% of the capacity.} $\ast/$\\
}

$T\_{{F_{Q}}}$ = {\rm get\_elem}(QF, ${{F_{Q}}}$) \;
entry = (${{F_{R}}}$ $\ll$ 3) \& $ \sim $7 \;

\If{$({\rm is\_empty\_element}({T\_{F_{Q}}}))$}{
    set\_elem(QF, ${{F_{Q}}}$, ${\rm set\_occupied}(entry))$ \;
     $\mathrm{QF\_entries} ++$\;
    \Return true
}

\If{$(!{\rm is\_occupied}({T\_{F_{Q}}})) $}{
    set\_elem(QF, ${{F_{Q}}}$, set\_occupied(${T\_{F_{Q}}}$)) \;
}

${\rm start} = {\rm find\_run\_index}({\rm QF}, {{F_{Q}}})$ \;
$s = {\rm start} $\;

\If{$s != {{F_{Q}}}$}{
    ${\rm entry} = {\rm set\_shifted}({\rm entry})$ \;
}

${\rm insert\_into}({\rm QF}, s, {\rm entry})$ \;
 $\mathrm{QF\_entries} ++$\;
\Return true
\end{algorithm}

\par As for addition, let us define \texttt{prev} as the element currently at index $s$, and \texttt{curr} as the element that will be written to index $s$. If the slot corresponding to the quotient is empty, we add it directly and end. We set the \texttt{is\_occupied} metadata bit and find the starting position of the \texttt{run} corresponding to ${{F_{Q}}}$. Note that the \texttt{is\_occupied} metadata bit in ${{F_{Q}}}$ must be marked. If the slot is not empty, the CS executes the loop: set the previous slot as \texttt{is\_shifted}, if the previous slot is set as \texttt{is\_occupied}, set the current slot \texttt{curr} as \texttt{is\_occupied}, and clear the metadata \texttt{is\_occupied} of the previous slot \texttt{prev}. If the \texttt{is\_occupied} in ${T}_{{F_{Q}}}$ is not marked, then we return the expected starting position of the \texttt{run} (because the \texttt{run} corresponding to ${{F_{Q}}}$ does not exist). Otherwise, we return the starting position of the \texttt{run} (because the \texttt{run} corresponding to ${{F_{Q}}}$ exists). If the \texttt{run} corresponding to ${{F_{Q}}}$ exists, we must determine the specific insertion position to maintain the order of \texttt{run} after insertion. The result is in the variable $s$. If $s$ is the starting position of the \texttt{run}, we must set \texttt{is\_continuation} at the starting position since, after adding the fingerprint at $s$, this element becomes part of the continuation of the \texttt{run}; otherwise, we must set \texttt{is\_continuation} in the element being added. We determine whether the insertion position is a canonical slot. If not, then we set \texttt{is\_shifted} in \texttt{insert\_into} and then move elements one by one.

When the number of entries in ${\rm QF_{i}}$ exceeds 5\% of its maximum size, a new ${\rm QF_{i+1}}$ is created (usually twice the size), the value of quotient $q$ increases by 1 (due to capacity doubling), and the value of remainder $r$ remains unchanged.
The total number of bits in the new filter is $p = q_{new} + r$. For example, the quotient and remainder of fingerprint 110$\big|$0101 are 110 and 0101. In this old filter, $p = 7, r = 4, q = p - r = 3$. As for the new filter that doubles the size, the fingerprint and remainder stay unchanged. The quotient $q_{new} = q+1=3 + 1 = 4$. So the new quotient is taken as the highest four digits 1100, and the remainder is still 0101. A formal representation of the \textbf{Addition} phase is displayed in Algorithm~\ref{alg:Addition}.

\begin{algorithm}
\caption{Update\_Deletion}
\label{alg:Deletion}
\KwIn{$(K_{\Sigma},O,del,\sigma;\mathrm{EDB})$}
\KwOut{$(K_{\Sigma},\sigma^{\prime};\mathrm{EDB}^{\prime})$}
 \textbf{DO:} \;
  $H({P}(\mathcal{H}(p_{d}))) \rightarrow e_{d}$,
  $e_{d}\rightarrow ({{F_{Q}}},{{F_{R}}})$\;
  ${\rm QF} \{H,(F_{Q},F_{R})\} \rightarrow {\rm QF_D}$\;
  SHVE.KeyGen$(msk,{\rm QF_D}) \rightarrow \rm{S_{SHVE}}({\rm QF_D})$\;
  Send $\rm{S_{SHVE}}({\rm QF_D})$ to the CS\;

 \textbf{CS:} \\
 ${\rm get\_elem}({\rm QF}, s) \rightarrow {\rm curr}$,
 ${\rm incr}({\rm QF}, s) \rightarrow sp$,
 $s \rightarrow {\rm orig}$\;

 \While{$(\textup{true})$}{
    ${\rm is\_occupied}({\rm curr}) \rightarrow {\rm curr\_occupied}$\;

    \If{$({\rm is\_empty\_element}({\rm next}) \parallel {\rm is\_cluster\_start}({\rm next}) \parallel sp == {\rm orig})$}{
        ${\rm set\_elem}({\rm QF}, s, 0)$\;
        \textbf{return}\;
    } 
    \Else {
        \If{$({\rm is\_run\_start}({\rm next}))$}{
            \Do{$(!{\rm is\_occupied}({\rm {\rm get\_elem}({\rm QF}, {\rm F_{Q}}))})$}{
                ${\rm incr}({\rm QF}, {\rm F_{Q}}) \rightarrow {\rm F_{Q}}$\;
            }
            \If{$({\rm curr\_occupied} \&\& \ {\rm F_{Q}} == s)$}{
                ${\rm clr\_shifted}({\rm next}) \rightarrow {\rm updated\_next}$\;
            }
        }
        ${\rm set\_elem}({\rm QF}, s, {\rm curr\_occupied} ? {\rm set\_occupied}({\rm updated\_next}) : {\rm clr\_occupied}({\rm updated\_next}))$\;
        $sp \rightarrow s$\;
        ${\rm incr}({\rm QF}, sp) \rightarrow sp$\;
        ${\rm next} \rightarrow {\rm curr}$\;
    }
}
   
\If{$(!{\rm is\_occupied}(T_{F_Q})) || (!{\rm QF} \rightarrow {\rm QF_{entries}})$}{
     \Return true\;
}

${\rm start} = {\rm find\_run\_index}({\rm QF}, F_{Q})$\;
$s = {\rm start}$\;

\Do{$({\rm is\_continuation}({\rm get\_elem}({\rm QF}, s)))$}{
    $rem = {\rm get\_remainder}({\rm get\_elem}({\rm QF}, s))$\;
    
    \If{$rem == F_{R}$}{
        \textbf{break}\;
    }
    \ElseIf{$rem > F_{R}$}{
         \Return true\;
    }
    
    $s = {\rm incr}({\rm QF}, s)$\;
}

\If{$(rem != F_{R})$}{
    \Return true\;
}

delete\_entry$({\rm QF}, s, F_{Q})$\;

 $\mathrm{QF\_entries} --$\;
\Return true

\end{algorithm}
\par As for deletion, $sp$ represents the index of the entry after $s$, \texttt{curr} represents the value of the slot corresponding to $s$, \texttt{next} represents the value of the slot corresponding to $sp$, and \texttt{next} represents the original value of $s$. The CS retrieves the current element from position $s$ in the QF and obtains the next increment position of $s$, saving the original starting position. Then it enters an infinite loop to process entry sliding. First, it checks if the current slot \texttt{curr} is occupied. If the next slot is empty or marks the start of a cluster, the current slot \texttt{curr} is set as empty. Otherwise, the CS prepares to update the next slot \texttt{next}. If the next slot \texttt{next} marks the start of a run, the CS finds the next occupied quotient $F_{Q}$ and increments the quotient's position. If the current slot \texttt{curr} is occupied and the quotient equals the current position, the \texttt{is\_shifted} flag is cleared.

The CS keeps the slot \texttt{is\_occupied} and updates the position pointer 
$sp$, incrementing both the pointer 
$sp$ and the current slot \texttt{curr}. If the start of a run for the quotient is found at position $s$, the CS traverses consecutive slots to retrieve the remainder of the current slot \texttt{curr}. If a matching remainder is $rem$ found, the traversal terminates. If the remainder $rem$ is greater than the target remainder $F_{R}$, the CS returns true, indicating that no match was found. The CS then moves $s$ to the next slot and continues until the target remainder is found. Once the target remainder is found, the CS deletes the entry. A formal representation of the \textbf{Deletion} phase is displayed in Algorithm~\ref{alg:Deletion}.

\subsection{Trinity-\uppercase\expandafter{\romannumeral2}: Trinity with Improved Security and Accuracy} 

\subsubsection{Technique Overview}
The basic Trinity we proposed earlier is efficient enough, but as we said before, we want a secure and accurate and dynamic Trinity. So first, we use “salts" and CPRF techniques to avoid leakage from addition. To achieve this, Trinity employs a technique where hashed order tokens, denoted as $OT_i$, are used to introduce unique “salts" to the encrypted data points. This salting process is streamlined by utilizing a CPRF, which effectively minimizes bandwidth usage. Upon a new node being added to the EDB (specifically, the $(i+1)$th addition), a special ordering token, $OT_i$, is generated by the DO using the secret key $K$. This token is then used to “salts" the space-time node, enhancing its security. When a search query is initiated, the DUs provide both the ST and the query itself to the CS. The CS, in turn, leverages the ST to compute all necessary ordering tokens, subsequently reversing the salting process to unveil the original encrypted points.

We utilize the quotient filter data structure, a variant of the bloom filter. Like the bloom filter, the quotient filter also exhibits a false positive rate. And the FPR is an inherent challenge in filters, stemming from the design philosophy of the filter. It consists of a fixed-size binary bit array and a series of random mapping functions (hash functions). The core idea is to use multiple different hash functions to address conflicts. Due to the issue of hash collisions (where two different elements may map to the same value after applying a hash function), it introduces multiple hash functions to reduce collisions. If any hash function determines that an element is not in the set, then the element is definitely absent. The element likely exists only when all hash functions unanimously indicate its presence.

\begin{equation}
		\nonumber
		\textup{FPR}=\left ( 1-e^{-k\cdot \frac{n}{m} } \right ) ^{k},
	\end{equation}
 where $k$ is the number of hash functions, $n$ is the number of elements, $m$ is the length of the filter, $e$ is natural constant.
 \par Since false positives are inevitable in filter structures, we must implement a verification method to validate results. For ensuring the accuracy of results, we maintain a dedicated verify token for each file, linked to its unique ind. Bitmaps, known for their simplicity and efficiency, employ a binary bit array to represent information. The most prevalent bitmap indexing technique involves associating each bit with a specific element's position: a 1 signifies the element's presence within a set, while a 0 indicates its absence. Then we compress verify token with Roaring Bitmaps \cite{Chambi2014BetterBP} and encrypt it. Upon executing a search request, the verify tokens are presented to the DU alongside the other results. Having obtained these tokens, the DU decrypts and authenticates the results using decrypted verify token. The verification process not only reduces FPR but also decreases the quotient filter size, resulting in surprisingly significant space cost savings.
\subsubsection{Details of Trinity-\uppercase\expandafter{\romannumeral2} Construction} 
In this section, we will briefly introduce \textbf{Setup} and \textbf{Search}, focusing on \textbf{Addition} and \textbf{Verification}. \textbf{Deletion} phase in Trinity-\uppercase\expandafter{\romannumeral1} and Trinity-\uppercase\expandafter{\romannumeral2} is the same, so it will not be repeated here

\par \textbf{Setup$(\{0,1\}^{\lambda },0\rightarrow c)\rightarrow(K_{\Sigma},\sigma;\mathrm{EDB},OT_c)$:}
Given a security parameter $\lambda$, a update counter $c$ and a database DB, it returns a master key $msk$, $t$ hash functions $H=\{{H}_{i}(\cdot)\}_{i=1}^{t}$, an order token $OT_c$ and an encrypted database EDB. One of the key difference between Trinity-\uppercase\expandafter{\romannumeral1} and Trinity-\uppercase\expandafter{\romannumeral2} is the introduction of counter c, which tracks the number of updates. And the DO generates $OT_i$ by $\mathbb{G}$ with the input of $msk$ and $c$. Within the salting process, $OT_c$ is itself “salted" using a hash function H. Subsequently, the DU incorporates this salt into the original fingerprints, $e_{i}$, resulting in the generation of salted fingerprints, $e_{i}^c$.
And rest are the same to the Trinity-\uppercase\expandafter{\romannumeral1}. A formal representation of the \textbf{Setup} phase is displayed in Algorithm~\ref{alg:setup}.
\begin{algorithm}
\caption{Setup ($1^\lambda$)}
\label{alg:setup}
\KwIn{$(\{0,1\}^{\lambda }),c,\mathrm{DB})$}
\KwOut{$(K_{\Sigma},\sigma;OT_c,\mathrm{EDB})$}
$\mathrm{Setup(1^{\lambda})}\rightarrow msk$
\;
Randomly generates $H=\{{H}_{i}(\cdot)\}_{i=1}^{t}$\;
$K_{\Sigma}=\{msk,H\}$,
$\mathbb{G}(msk,c)\rightarrow OT_{c}$,
$H(K, OT_{c})\rightarrow e^{c}$\;
 \For {{\rm space-time node} $O_i=\{p_i,ind_i\}$ }{
 $H({P}(\mathcal{H}(p_{i})))\rightarrow e_{i}$,
$e_{i}\oplus e^{c}\rightarrow e_{i}^{c}$,
 $e_{i}^{c} \rightarrow ({F_{Q}^{c}},{F_{R}^{c}})$\;
 ${\rm QF}\{H,{{F_{Q}^{c}}},{{F_{R}^{c}}})\}\rightarrow {\rm QF_i}$\;
 SHVE.Enc$(msk,\textbf{“True"},{\rm QF_i})\rightarrow \rm{C_{SHVE}}({\rm QF_i})$\;
 $\{\rm{C_{SHVE}}({\rm QF_{i}}),ind_i\}_{i=1}^{N}\rightarrow $EDB\;

}

  \textbf{return} $(K_{\Sigma},\sigma;\mathrm{EDB},OT_c)$ \;
\end{algorithm}

\begin{algorithm}
 \caption{Search}
 \label{alg:search}
 \KwIn{$(K_{\Sigma},\mathrm{Q},\sigma;\mathrm{EDB})$}
 \KwOut{$ \mathrm{R} $}
 \textbf{DU:} \\
$\bar{\mathbb{G}}(msk,c)\rightarrow ST$\;

Send $(ST,K,\rm{S_{SHVE}}({\rm QF_Q}))$ to the CS\;
 \textbf{CS:} \\
 $ \mathbb{G}(ST,c)\rightarrow OT_c$,
 $H(K, OT_{c})\rightarrow e^{c}$,
  $H(\mathcal{P}(\mathcal{H}(R)))\rightarrow e_q$\;
$e_{q}\oplus e^{c}\rightarrow e_{q}^{c}$,
  $e_{q}^{c} \rightarrow ({F_{Q}^{c}},{F_{R}^{c}})$,
 ${\rm QF}\{H,F_{Q}^{c},F_{R}^{c})\}\rightarrow {\rm QF_Q}$\;
 SHVE.KeyGen$(msk,{\rm QF_Q})\rightarrow \rm{S_{SHVE}}({\rm QF_Q})$\;
   \If{${\rm is\_occupied}(T\_F_{Q}) == \textup{false}$} {
            $\Return\ \textup{false}$ \ $/\ast$ {Check if the element at the quotient index is occupied. Return false if unoccupied.} $\ast/$\\
        } 
 $ s = {\rm find\_run\_index}({\rm QF}, {{F_Q}}) $;\ $/\ast$ {Finds the starting index of the \texttt{run} associated with the quotient.} $\ast/$\\

\Do{$({\rm is\_continuation}({\rm get\_elem}({\rm QF}, s)))$}{$rem = {\rm get\_remainder}({\rm get\_elem}({\rm QF}, s))$\;
\If  {$rem == {{F_R}}$}
	{  
	$ \Return \ \textup{true}$\;
	}

	\Else{$rem > {{F_R}} $\;
	$ \Return \ \textup{false}$\;
	}
  $s = {\rm incr} ({\rm QF},s)$$/\ast$ {Increments the index to move to the next element in the \texttt{run}.}$\ast/$}	 
 $ \Return \ \textup{false}$\;
 
\end{algorithm}
\par \textbf{Search $(K_{\Sigma},\mathrm{Q},\sigma;\mathrm{EDB})\rightarrow \mathrm{R}$:}
Trinity-\uppercase\expandafter{\romannumeral1} and Trinity-\uppercase\expandafter{\romannumeral2} are highly similar in the search phase, with the only difference being that Trinity-\uppercase\expandafter{\romannumeral2} employs $K$ and $e_{q}^{c}$ for salting desalting. Search token $e_{q}^{c}$ are like $e_{i}^{c}$ but input with a range query instead of space-time node. After the search, all added nodes in QF.Cache are desalted and transferred to ${\rm QF_i}$. A formal representation of the \textbf{Search} phase is displayed in Algorithm~\ref{alg:search}.
\begin{algorithm}
\caption{Addition}
\label{alg:addition}
\KwIn{$(K_{\Sigma},O,add,\sigma;\mathrm{EDB})$}
\KwOut{$(K_{\Sigma},\sigma^{\prime};\mathrm{EDB}^{\prime})$}
 \textbf{DO:} \\
$\mathbb{G}(msk,c)\rightarrow OT_{c}$,

$H(K, OT_{c})\rightarrow e^{c}$,
  $H({P}(\mathcal{H}(p_{a})))\rightarrow e_{a}$\;
 $e_{a}\oplus e^{c}\rightarrow e_{a}^{c}$,
 $e_{a}^{c} \rightarrow ({F_{Q}^{c}},{F_{R}^{c}})$\;
Send $(c,F_{Q}^{c},F_{R}^{c})$ to the CS\;
 \textbf{CS:} \\
 $(c,F_{Q}^{c},F_{R}^{c})\rightarrow $QF.Cache\;

${\rm prev} = {\rm get\_elem}({\rm QF_{i}}, s)$ \;
${\rm empty} = {\rm is\_empty\_element}({\rm prev}) $\;

\Do{$(!{\rm empty})$}{
    \If{$(!{\rm empty})$}{
        ${\rm prev} = {\rm set\_shifted}({\rm prev})$ \;
        \If{$({\rm is\_occupied\_element}({\rm prev}))$}{
            ${\rm curr} = {\rm set\_occupied\_element}({\rm curr})$ \;
            ${\rm prev} = {\rm clr\_occupied\_element}({\rm prev})$ \;
        }
    }
    ${\rm set\_element}({\rm QF}, s, {\rm curr}) $\;
    ${\rm curr} ={\rm prev}$ ,
    $s = {\rm incr}({\rm QF}, s) $,
    ${\rm prev} = {\rm get\_elem}({\rm QF}, s)$ \;
    ${\rm empty} = {\rm is\_empty\_element}({\rm prev}) $\;
}
\If{${\rm(QF_{i}\_entries \ge \frac{1}{5}QF_{i}\_max\_size)}$}{
    $\mathrm{i} ++$;\ $/\ast$ {Expand the original ${\rm QF_{i}}$ to ${\rm QF_{i+1}}$ when ${\rm QF_{i}}$ exceeds half capacity.} $\ast/$\\
}
$T\_{{F_{Q}}}$ = get\_elem(${\rm QF_{i}}, {{F_{Q}^{c}}}$) \;
entry = (${{F_{R}^{c}}}$ $\ll$ 3) \& $ \sim $7 \;

\If{$({\rm is\_empty\_element}({T\_{F_{Q}}}))$}{
    set\_elem$({\rm QF_{i}}, {{F_{Q}^{c}}}, {\rm set\_occupied}({\rm entry}))$ \;
     $\mathrm{QF\_entries} ++$\;
    \Return true
}

\If{$(!{\rm is\_occupied}({T\_{F_{Q}}})) $}{
    ${\rm set\_elem}({\rm QF_{i}}, {{F_{Q}^{c}}}, {\rm set\_occupied}({T\_{F_{Q}}}))$ \;
}

${\rm start} = {\rm find\_run\_index}({\rm QF_{i}}, {{F_{Q}^{c}}})$ \;
$s = {\rm start} $\;
\If{$s != {{F_{Q}^{c}}}$}{
    ${\rm entry} = {\rm set\_shifted}({\rm entry})$ \;
}

${\rm insert\_into}({\rm QF_{i}}, s, {\rm entry})$ \;
$\mathrm{QF\_entries} ++$\;
\textbf{return} true\;
\end{algorithm}

\par \textbf{Update $(K_{\Sigma},O,add,\sigma;\mathrm{EDB})\rightarrow (K_{\Sigma},\sigma^{\prime};\mathrm{EDB}^{\prime})$:} Since delete operation in Trinity-\uppercase\expandafter{\romannumeral1} and Trinity-\uppercase\expandafter{\romannumeral2} are exactly the same, so here we discuss add operation only. Every time the DO wants to add a space-time node $p=(x,y,z)$ to EDB, first, the DO increases the value of the counter to the match node O. Then the DO starts the “salted" process like \textbf{Setup} phase. Next, the DO sends the salted token into the QF.Cache of the server like in Fig.~\ref{quotient SBF}. If the number of entries in QF exceeds 20\% of its maximum size, expand the QF twice as the original size. The salted values $(e_{a}^{c},F_{Q}^{c},F_{R}^{c})$ are desalted and sent to ${\rm QF_{i}}$ after the next search. 

 The remaining operations are identical to those in Trinity-\uppercase\expandafter{\romannumeral1}. A formal representation of the \textbf{Addition} phase is displayed in Algorithm~\ref{alg:addition}.
\begin{figure}[htbp]
	
	\centering
	\includegraphics[scale=0.20]{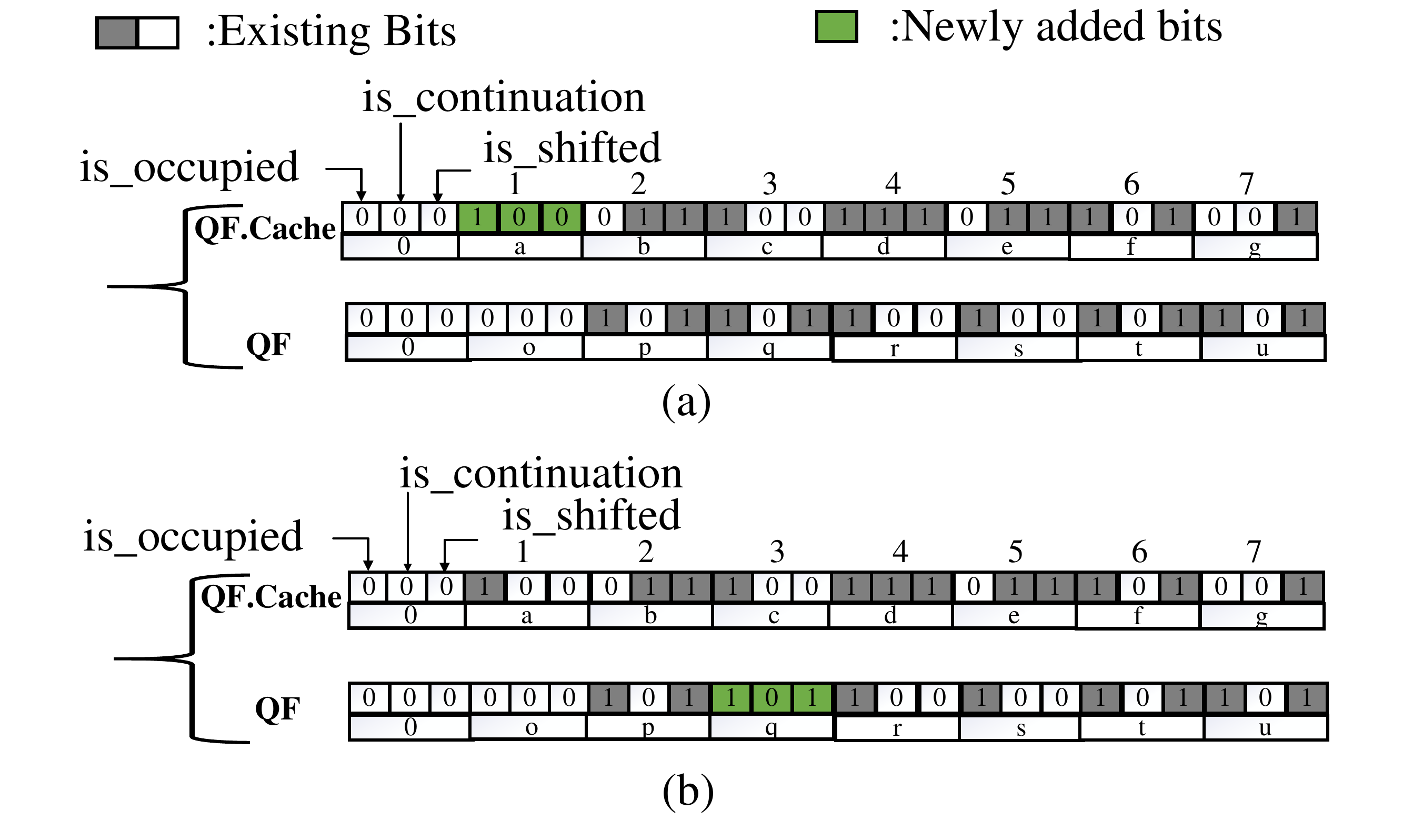}
	\caption{An example of addition}
	\label{quotient SBF}
\end{figure}
\par \textbf{Verification:}
  As mentioned earlier in technique overview, in order to maintain a low false positive rate of about 0.01$\%$, it is customary to set the ratio $N/m$ at 20, and the parameter $t$ at 14. This implies that for a database with a capacity of 5 million space-time nodes, a filter size of 100 million would be required. Such a configuration results in a considerable waste of storage space.But with the help of verification, we do not require such low FPR, we only have to keep a basic accuracy of quotient filter that doesn't affect normal search function. Based on verification, we set FPR$\leq 6\% $, and $t=-ln(p)/ln(2)$, we have 4.05. And assume we have 1 millions elements in the filter, we should set the length of filter as $N=mt/ln(2)$, the number is 5.85 millions. So we set $N/m=6$, $t=4$, and the FPR is 5.56$\%$. Consequently, both storage and computational costs decrease precipitously. Next, I will formally describe the Trinity-\uppercase\expandafter{\romannumeral2}. Verification Process, which consists of two distinct phases: Setup and Search.
\begin{itemize}
\item \textbf{Setup}: An additional verify array is generated for each space-time node ${V}(\mathcal{H}(p_i))\rightarrow va_i$. An extra secret key $sk$ is generated to encrypt verify arrays. Then we compress the verify array $\textbf{comp}(vt_i)\rightarrow vt_{i}^{\prime}$ and encrypt it $\textbf{Enc}(va_{i}^{\prime})\rightarrow vt_{i}$. Finally, we store $vt_{i}$ associated with the $ind_i$. The rest procedure are the same to the Trinity-\uppercase\expandafter{\romannumeral2}. 
\item \textbf{Search}: The CS sends verify token $vt_i$ along with other results to the DU. Once receiving results, the DU decrypts the $vt_i$ and verifies the token according to the query, if it matches, keep the results; if not, remove the index.
\end{itemize}
\section{Security Analysis}
This section delves into security analysis of our scheme. We begin by defining a leakage function, and then proposing a rigorous proof of the construction's security.

\par  The leakage function, denoted as $\mathcal{L}$, encompasses leakage from three distinct phases: setup, search, and update operations.
$\mathcal {L}=\lbrace \mathcal{L}^{Stp},\mathcal {L}^{Srch},\mathcal{L}^{Updt}\rbrace$. Meanwhile, the leakage function can also be explained as follows:
\par Access pattern: The access pattern, denoted as ${ap}$, signifies the specific locations of documents that align with the query.
\par Size pattern: The size pattern, denoted as $\mathbb{S}$, reflects in the number of objects $m$.
\begin{theorem}\label{leakage}
	
	Define the leakage function $\mathcal{L}_1$ of Trinity-{\uppercase\expandafter{\romannumeral1}}, if it can be described as following: 
	\begin{equation}
		\nonumber
		\mathcal{L}_{\uppercase\expandafter{\romannumeral1}}= \{\mathbb{S}(DB),ap\},
	\end{equation}
 
	Trinity-{\uppercase\expandafter{\romannumeral1}} is Indistinguishability under Selective Chosen-Plaintext Attack (IND-SCPA) secure if SHVE is IND-SCPA secure.
	
\end{theorem}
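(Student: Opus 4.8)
The plan is to prove the claim in the real-versus-ideal simulation paradigm of Definition~\ref{def1}: I will construct a PPT simulator $\mathcal{S}$ that, given only $\mathcal{L}_{\uppercase\expandafter{\romannumeral1}}=\{\mathbb{S}(DB),ap\}$, produces \textbf{Setup}, \textbf{Search} and \textbf{Update} transcripts indistinguishable from the real ones, and I will bound the adversary's advantage by the IND-SCPA advantage against SHVE. The leverage here is structural: in \TrinityI every quotient-filter slot is published only as an SHVE ciphertext (Algorithm~\ref{alg:Setup}) and every query is issued only as an SHVE key (Algorithm~\ref{alg:Search}), so the entire observable transcript is a bag of SHVE objects, and the proof reduces to the secrecy of those objects.

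First I would specify $\mathcal{S}$ phase by phase. For $\mathcal{S}^{Stp}$, the simulator reads the size pattern $\mathbb{S}(DB)=m$ to fix the number $N$ of filter slots and, for each slot, runs SHVE.Enc on a canonical dummy index vector together with the public message ``True''; the $N$ ciphertexts form the simulated EDB. For $\mathcal{S}^{Srch}$, the simulator is handed the access pattern $ap$, i.e. exactly the slots a query must match; it then synthesizes a predicate vector whose key, under SHVE.Query against the simulated ciphertexts, returns the match symbol on precisely the slots named by $ap$ and null elsewhere, and outputs SHVE.KeyGen of that vector as the simulated token. For $\mathcal{S}^{Updt}$, since \TrinityI carries no salting, an addition or deletion is a single fresh SHVE ciphertext generated as in \textbf{Setup}, contributing no correlation beyond the size change already accounted for by $\mathbb{S}$.

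Next I would prove indistinguishability by a hybrid argument. Let $H_0$ be the real game and $H_q$ the fully simulated game, where $q$ is the total number of SHVE ciphertexts and keys in the transcript; the hop from $H_{j-1}$ to $H_j$ swaps the $j$-th real SHVE object for its simulated counterpart. Each hop is justified by a direct reduction: a distinguisher $\mathcal{B}$ separating $H_{j-1}$ and $H_j$ is recast as an IND-SCPA adversary against SHVE by submitting the real and the dummy index vectors as the two challenges and embedding the challenge object in position $j$. A union bound then yields
\begin{equation}
\nonumber
\left| \Pr[Game_{\mathcal{A}}^{\mathcal{R}}(\lambda)=1]-\Pr[Game_{\mathcal{A}}^{\mathcal{S}}(\lambda)=1] \right| \leq q\cdot\mathsf{Adv}^{\text{IND-SCPA}}_{\text{SHVE},\mathcal{B}}(\lambda)\leq \mathsf{negl}(\lambda),
\end{equation}
which is exactly the bound demanded by Definition~\ref{def1}.

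The step I expect to be the main obstacle is enforcing \emph{consistency} between the simulated ciphertexts and the simulated tokens. SHVE imposes a correctness requirement as well as secrecy, so the dummy index vectors of $\mathcal{S}^{Stp}$ and the synthesized predicate vectors of $\mathcal{S}^{Srch}$ must jointly reproduce the access pattern $ap$ exactly---no slot matching that $ap$ forbids, none missing that it requires---and the per-hop IND-SCPA restriction (every queried key must agree, match or non-match, on the two challenge index vectors) must hold for the dummy chosen in the reduction. This is precisely where the \emph{selective} flavor of IND-SCPA is essential: the adversary commits to its database and queries up front, so the reduction can fix both challenge vectors in advance to agree on all queried predicates, consistent with the fact that the Hilbert-curve/prefix/quotient encoding is applied deterministically and sealed inside the SHVE ciphertexts before any query. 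Once this consistency is arranged the reduction is mechanical, and because the encoding layer adds no observable output, the only leakage that survives is $\mathbb{S}(DB)$ and $ap$, as claimed.
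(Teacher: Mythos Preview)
Your proposal is correct and shares the paper's core reduction---bounding the adversary's distinguishing advantage by $q$ times the IND-SCPA advantage against SHVE, where $q$ counts the SHVE objects in the transcript---but your framing differs. The paper presents the argument as a single selective IND-SCPA game (Setup, Phase~1, Challenge, Phase~2, Guess) in which the challenger encrypts the quotient-filter slots and search tokens directly via SHVE, and then asserts that distinguishing Trinity-I transcripts amounts to distinguishing $q'$ SHVE instances; it does not build an explicit simulator nor walk through hybrids. You instead follow Definition~\ref{def1} literally: you construct $\mathcal{S}^{Stp},\mathcal{S}^{Srch},\mathcal{S}^{Updt}$ from the leakage $\{\mathbb{S}(DB),ap\}$ and prove real/ideal indistinguishability by a hybrid over individual SHVE objects. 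Your route is more rigorous---in particular, you surface the token/ciphertext \emph{consistency} constraint (the simulated SHVE keys must decrypt exactly the slots named by $ap$) and explain why the selective model lets the reduction pre-align challenge vectors, a point the paper's proof does not address. The paper's terser game-based sketch buys brevity; your simulator-plus-hybrid buys a cleaner match to the stated security definition and a clearer account of where selectivity is used.
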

\begin{table*}[htbp]\footnotesize
	\caption{Complexity comparisons}
	\centering
	\label{Comparison}
 \resizebox{\linewidth}{!}{
	\begin{threeparttable}
		{	\begin{tabular}{ | c | c | c | c | c | c |}
				\hline	
				
				\multicolumn{1}{|c|}{ \multirow{2}*{$Scheme$} }& \multicolumn{2}{c|}{Computation} & \multicolumn{2}{c|}{Communication}& \multirow{2}*{Storage size}\\
				\cline{2-5} 
				$ $&$Search$&$Update$&$Search$&$Update$&\\
               
                \hline
				GRS-\uppercase\expandafter{\romannumeral 2}& $\mathcal {O}(\log_{ }{R}\cdot N)$& $\mathcal {O}(2^t N)$&$\mathcal {O}(\log_{ }{R}\cdot N)$& $\mathcal {O}(2^t N)$&$\mathcal {O}(2^t N)$ \\
               \hline
				$\mathsf {DSSE}_{\mathsf {SKQ}}$ &$\mathcal {O}(a_{w}(|R|+k))$& $\mathcal {O}(\log N+\log m)$&$\mathcal {O}((1+k)l)$& $\mathcal {O}((\log N+\log m)(2\lambda +l))$&$\mathcal {O}(a_{w}(\log N+\log m)(2\lambda +l))$ \\
                \hline
				SKSE-{\uppercase\expandafter{\romannumeral2}} &$\mathcal {O}(\overline{x}k m\cdot \log m)$& $N/A$&$  \mathcal {O}(\overline{x}k\lambda m\cdot \log m )$& $N/A$&$\mathcal {O}(m\lambda N)$ \\
                \hline
                Trinity-{\uppercase\expandafter{\romannumeral1}} &$\mathcal {O}(k m\cdot \log m)$& \textcolor{red}{$\mathcal {O}(k)$}&$  \mathcal {O}(k\lambda m\cdot \log m )$& $\mathcal {O}(k \lambda)$&$\mathcal {O}(m\lambda N)$ \\
                \hline
                Trinity-{\uppercase\expandafter{\romannumeral2}} &$\mathcal {O}(k (m\cdot \log m+\log c))$& \textcolor{red}{$\mathcal {O}(k)$}&$ \mathcal {O}(k\lambda (m\cdot \log m+\log c))$& $\mathcal {O}(k \lambda)$&$\mathcal {O}(m\lambda N)$ \\
				\hline

				
		\end{tabular}}
		
		\begin{tablenotes}
			\footnotesize
			\item[ ] \textbf{Notes.} $N$ is the number of data points in the database(\textit{i.e.} the number of entries), $M$ is the number of keywords in the database, $m$ is the size of  database, $|U|$ is the size of each ciphertext,  $\lambda$ is a security parameter, $k$ is the number of returned results range, $|k|$ is the size of key $K$, $|\varepsilon|$ is the size of random vector $\varepsilon$, $\overline{x}$ is the average number of the nodes at each level that traversed by the CS, $c$ is the node number of GGM tree and $R$ is search range radius. $2^t$ is the size of the binary tree.
		\end{tablenotes}
	\end{threeparttable}
 }
\end{table*}

\par \begin{proof} The proof of Theorem \ref{leakage} is established through a simulation-based approach. Assuming that the adversary $\mathcal{A}$ cannot differentiate between the output of the real game and the ideal game, we can conclude that no leakage exists beyond $\mathcal{L}_1$. All spatio-temporal data are encrypted by SHVE which is proved in \cite{li2021secure}.
The security of Trinity-{\uppercase\expandafter{\romannumeral1}} is based on the security of SHVE. So the proof of Theorem \ref{leakage} leverages a simulation as follows:
\begin{itemize}
	\item \textbf{Setup}: The adversary $\mathcal{A}$ transmits a chosen database, denoted as ${\bf DB}=\{{\bf p}_{i},\ {\bf ind}_{i}\}_{i=1}^{N}$, to the challenger $\mathcal{C}$. The challenger $\mathcal{C}$ generates both a set of $t$ random hash functions H and a secret key $msk$, maintaining the secrecy of $K_{\Sigma}=\{msk,H\}$. 
\item \textbf{Phase 1}: The adversary $\mathcal{A}$ proceeds to adaptively select a series of queries, each represented as $Q_j$, where $j \in [q_1]$. In response to each query, $\mathcal{C}$ encodes the range query $Q_j$ into ${P}(\mathcal{H}(Q_j)))$. Subsequently, for each prefix element $pe_k$ within ${P}(\mathcal{H}(Q_j)))$ (where $1 \leq k \leq \beta$), $\mathcal{C}$ constructs a quotient filter comprises $({{H}}({p}_{i}),{{F_{Q}}},{{F_{R}}})$, and then encrypts it with SHVE, with $'0'$ positions being substituted with $'\star'$ and $'1'$ positions remaining unaltered. Finally, $\mathcal{C}$ transmits the search token $ST_j$ to $\mathcal{A}$.
\item \textbf{Challenge}: $\mathcal{C}$ randomly selects a bit, denoted as $b$, from the set $\{0, 1\}$. For each space-time node $O_i$, $\mathcal{C}$ constructs a corresponding quotient filter, $QF\{H,(e_{i},{{F_{Q}}},{{F_{R}}})\}\rightarrow {\rm QF_i}$. Subsequently, $\mathcal{C}$ offers $\mathcal{A}$ with the challenge encrypted database EDB, constructed by SHVE.Enc$(msk,{\rm QF_i})\rightarrow \rm{C_{SHVE}}({\rm QF_i})$ for $i \in [N]$.
\item \textbf{Phase 2}: $\mathcal{A}$ repeats the \textbf{Phase 1} procedure and receives $ST_j$ for $q_1 +1\leq j\leq q_2$.
\item \textbf{Guess}: $\mathcal{A}$ takes a guess of $b$.
\end{itemize}

\par Due to the utilization of SHVE for encryption within our Trinity-{\uppercase\expandafter{\romannumeral1}}, the indistinguishability of indexes and trapdoors within the Trinity-{\uppercase\expandafter{\romannumeral1}} directly inherits the indistinguishability property of SHVE. Notably, the security game associated with our Trinity-{\uppercase\expandafter{\romannumeral1}} is essentially simulated by $q^{\prime}$ instances of SHVE, where $q^{\prime}$ denotes the total number of SHVE instances engaged in the game. As a result, the ability of adversary $\mathcal{A}^{\prime}$ to distinguish between two SHVE encrypted values would directly enable it to distinguish between indexes and trapdoors within the Trinity-{\uppercase\expandafter{\romannumeral1}}. This relationship can be succinctly expressed as follows:
\begin{equation}
	\nonumber
 \begin{array}{l}{{\mathbf{Adv}_{\mathrm{Trinity-{\uppercase\expandafter{\romannumeral1}}},\mathcal{A}}^{\mathrm{IND-SCPA}}(1^{\lambda})\leq\mathbf{Adv}_{\mathrm{SHVE},\mathcal{A}^{\prime}}^{\mathrm{IND-SCPA}}(1^{\lambda})}} {{\leq q^{\prime}\cdot\mathrm{negl}(\lambda)}}\end{array}.
\end{equation}
\par Since Trinity-{\uppercase\expandafter{\romannumeral2}} also utilizes SHVE as its encryption tool, its security can be proven using the same approach. 
\end{proof}
\par We formally define forward security for Trinity-{\uppercase\expandafter{\romannumeral2}}, where CPRF and “salts" work in tandem to achieve this property. This combination effectively thwarts adaptive file injection attacks, as rigorously proven by the following demonstration.

\par \textbf{Forward Security.} 
Forward security prevents information leakage during the addition of new data in dynamic updates. For clarity, we adopt the precise definition from \cite{bost2016ovarphiovarsigma}.
\begin{theorem}
	
If the function $\mathcal{L}^{FS}=\{\mathcal{L}^{Updt}\}$ is defined as follows, Trinity-{\uppercase\expandafter{\romannumeral2}} can be considered as forward-secure: 
	\begin{equation}
    \begin{split}
		\nonumber
   \mathcal{L}^{Updt}(op, in)= \mathcal{L}^{\prime}(op, {(ind_i, c)}). \\
    \end{split}
	\end{equation}

	The function $\mathcal{L}^{\prime}$ is stateless and $c$ represents the number of updated keywords for the updated file $ind_i$.
	
\end{theorem}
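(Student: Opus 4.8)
The plan is to prove forward security via a simulation-based hybrid argument in the same real-vs-ideal framework used for Theorem~\ref{leakage}, now focusing on the update transcript rather than the search transcript. I would construct a simulator $\mathcal{S}$ that, given only $\mathcal{L}^{Updt}(op,in)=\mathcal{L}'(op,(ind_i,c))$ --- that is, the operation type together with the counter/identifier pair, but \emph{not} the underlying point $p$ or its encoded prefix family --- produces a transcript that the honest-but-curious CS cannot distinguish from a genuine addition. Since in the \textbf{Addition} phase the only values leaving the DO are the counter $c$ and the salted fingerprint $e_a^c=e_a\oplus e^c$ split into $(F_Q^c,F_R^c)$, the simulator outputs the leaked $c$ verbatim and samples $(F_Q^c,F_R^c)$ uniformly at random from the appropriate bit-lengths, appending them to QF.Cache exactly as a real update would.

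Next I would establish indistinguishability through a short hybrid sequence. First replace $OT_c=\mathbb{G}(msk,c)$ by a truly random value, incurring at most the distinguishing advantage against the (constrained) PRF $\mathbb{G}$. Then, modeling $H(K,\cdot)$ as a pseudorandom function (or random oracle), replace the mask $e^c=H(K,OT_c)$ with a uniformly random string; because $e^c$ is now uniform and independent of everything the CS holds prior to a search, the salted value $e_a^c=e_a\oplus e^c$ is itself uniform, so the real distribution collapses onto the simulator's output. Summing over these two reductions bounds the total advantage by $\mathsf{negl}(\lambda)$, which shows the update transcript reveals nothing beyond $\mathcal{L}'(op,(ind_i,c))$.

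The forward-security-specific step, and the main obstacle, is arguing that this masking survives \emph{across} operations: a search token $ST=\bar{\mathbb{G}}(msk,c')$ issued for an earlier counter $c'$ must not let the CS desalt a later addition made at counter $c>c'$, since otherwise the freshly inserted entry could be matched against a previously issued query --- precisely the leakage forward security forbids. I would reduce this to the constrained/forward property of the GGM-style PRF $\mathbb{G}$: the key $ST$ only enables recomputing order tokens $OT_{c''}$ for $c''\le c'$, while $OT_c$ for $c>c'$ remains pseudorandom given $ST$. Consequently every mask $e^c$ tied to a post-search update is independent of all previously revealed tokens, so no observable correlation can arise between the new QF.Cache entry and any prior search transcript. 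Formalizing this forward-derivation bound --- namely that disclosing $ST$ for counter $c'$ does not degrade the pseudorandomness of $OT_c$ for larger $c$ --- is the crux of the argument and reduces cleanly to the security of the underlying constrained PRF.
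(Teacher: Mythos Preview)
Your proposal is correct and substantially more rigorous than the paper's own proof. The paper's argument for this theorem is entirely informal: it simply observes that spatio-temporal data is encrypted locally under SHVE before upload, that the master key is never shared with the CS, that the quotient filter stores only fingerprints, quotients and remainders, and that the DO incorporates ``salts'' derived from the CPRF alongside SHVE when adding nodes, so the CS cannot infer keyword information from past search tokens. There is no simulator construction, no hybrid sequence, and no explicit reduction to the security of any underlying primitive.

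By contrast, you give an actual simulation-based argument: you identify precisely what leaves the DO during \textbf{Addition} (the counter $c$ together with the salted fingerprint $(F_Q^c,F_R^c)$), build a simulator that samples the salted fingerprint uniformly, and justify indistinguishability through two hybrids (CPRF security to randomize $OT_c$, then PRF/random-oracle security of $H(K,\cdot)$ to randomize $e^c$, making $e_a\oplus e^c$ uniform). You also isolate the genuinely forward-security-specific step---that a previously issued $ST$ for counter $c'$ must not allow deriving $OT_c$ for $c>c'$---and reduce it to the constrained property of the GGM-style PRF. This is exactly the content the paper's proof gestures at but never writes down; your decomposition makes the dependence on each primitive explicit and yields a concrete $\mathsf{negl}(\lambda)$ bound. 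The paper's approach buys brevity; yours buys an argument that could actually be checked.
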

 \begin{proof}  
 First, spatio-temporal data is encrypted locally by the DO using SHVE before being transmitted to the CS. Then, Trinity-{\uppercase\expandafter{\romannumeral2}} uses SHVE to ensure document security while keeping the key private. In this way, the master key $msk$ of SHVE is held only by the DO, thus preventing unauthorized access by the CS. Indices undergo encryption locally before being uploaded to the CS for security and confidentiality. The quotient filter stores only fingerprints, quotients and remainders. And ST generation encrypts each range query using SHVE, CPRF and “salts", ensuring confidentiality. the DO incorporates “salts" alongside SHVE when adding nodes to the quotient filter, effectively preventing the CS from inferring keyword information from past search tokens.
\end{proof}

\section{Performance Evaluation}
We implemented Trinity in C++\footnote{https://github.com/eulermachine/trinity/}, with SHVE utilizing multi-threading technology, CPRF implemented via HMAC256, and Murmur serving as the hash function. In this section, a comprehensive comparison is made between the construction performance, volume size, token generation efficiency, search capability, addition performance, and deletion performance of Trinity-\uppercase\expandafter{\romannumeral 1} and Trinity-\uppercase\expandafter{\romannumeral 2}, as compared to those of SKSE-\uppercase\expandafter{\romannumeral 2} \cite{wang2021enabling}, GRS-\uppercase\expandafter{\romannumeral 2} \cite{kermanshahi2020geometric}, and $\mathsf {DSSE}_{\mathsf {SKQ}}$ \cite{wang2022forward}. We conducted experiments on an Ubuntu 20.04.5 LTS system using an Intel Xeon Gold 6226R processor. This processor has a base clock speed of 2.90 GHz. The system featured 64*8=512 GB of RAM. Network bandwidth used during these experiments was set at 100 Mbps.
\subsection{Theoretical Analysis} Table~\ref{Comparison} provides a comprehensive comparison of the time and space complexity of  GRS-\uppercase\expandafter{\romannumeral 2}, $\mathsf {DSSE}_{\mathsf {SKQ}}$, SKSE-\uppercase\expandafter{\romannumeral 2}, Trinity-\uppercase\expandafter{\romannumeral 1} and Trinity-\uppercase\expandafter{\romannumeral 2}. Since we use quotient filter as search structure and there are k queries sent by the DUs, the Trinity-\uppercase\expandafter{\romannumeral 1}'s computation complexity of search is $k m\cdot \log m$. So is computation complexity of update. As for communication complexity of search, it is $\mathcal {O}(k\lambda m\cdot \log m )$. The communication complexity of update is $\mathcal {O}(k\lambda m)$. Trinity-\uppercase\expandafter{\romannumeral 2} differs in two aspects: the use of “salts" and QF.Cache. The employed GGM-tree requires $\mathcal {O}(\log c )$ computation time to retrieve ST or OT. So we add $\mathcal {O}(\log c )$ and $\mathcal {O}(\lambda \log c )$ to the computation complexity and communication complexity of search in Trinity-\uppercase\expandafter{\romannumeral 2}. And since we set up a QF.Cache, there is an extra size of EDB, but we set the size of QF.Cache to $\mathcal {O}(\log m)$, which is negligible.

\subsection{Datasets} We use a spatio-temporal dataset from Yelp to evaluate the performance. We selected 1,007,016 timestamped locations, with some locations appearing multiple times at different timestamps. In the context of fine-grained spatiotemporal data, we refer to the logistics transportation scenario and set the spatial granularity to 10 meters and the temporal granularity to 5 minutes\cite{liang2019urbanfm}.
\begin{figure}[htbp]
\centering{
\subfloat[EDB setup latency]{\includegraphics[width=.48\columnwidth]{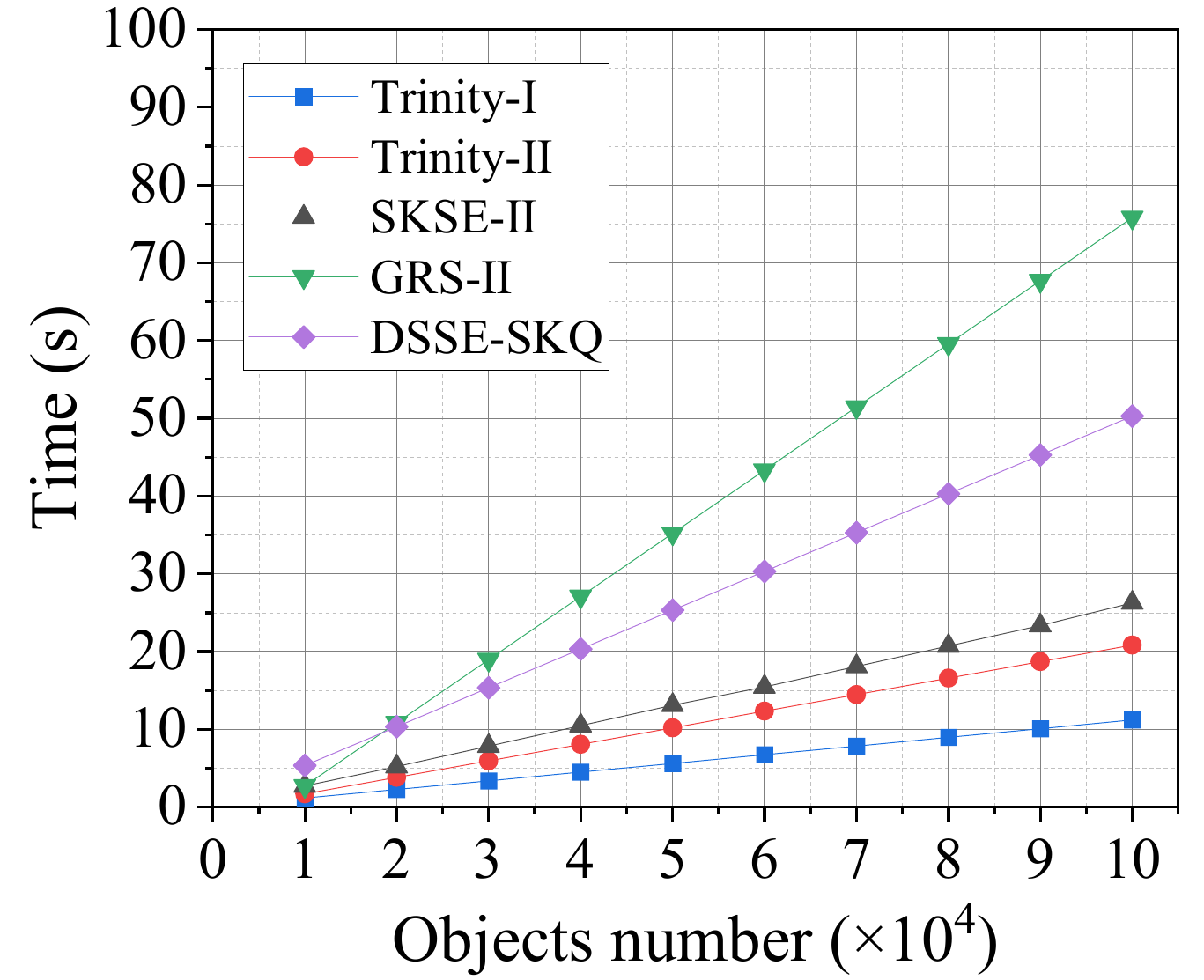}\label{fig:5_build} }
\subfloat[EDB size]{\includegraphics[width=.50\columnwidth]{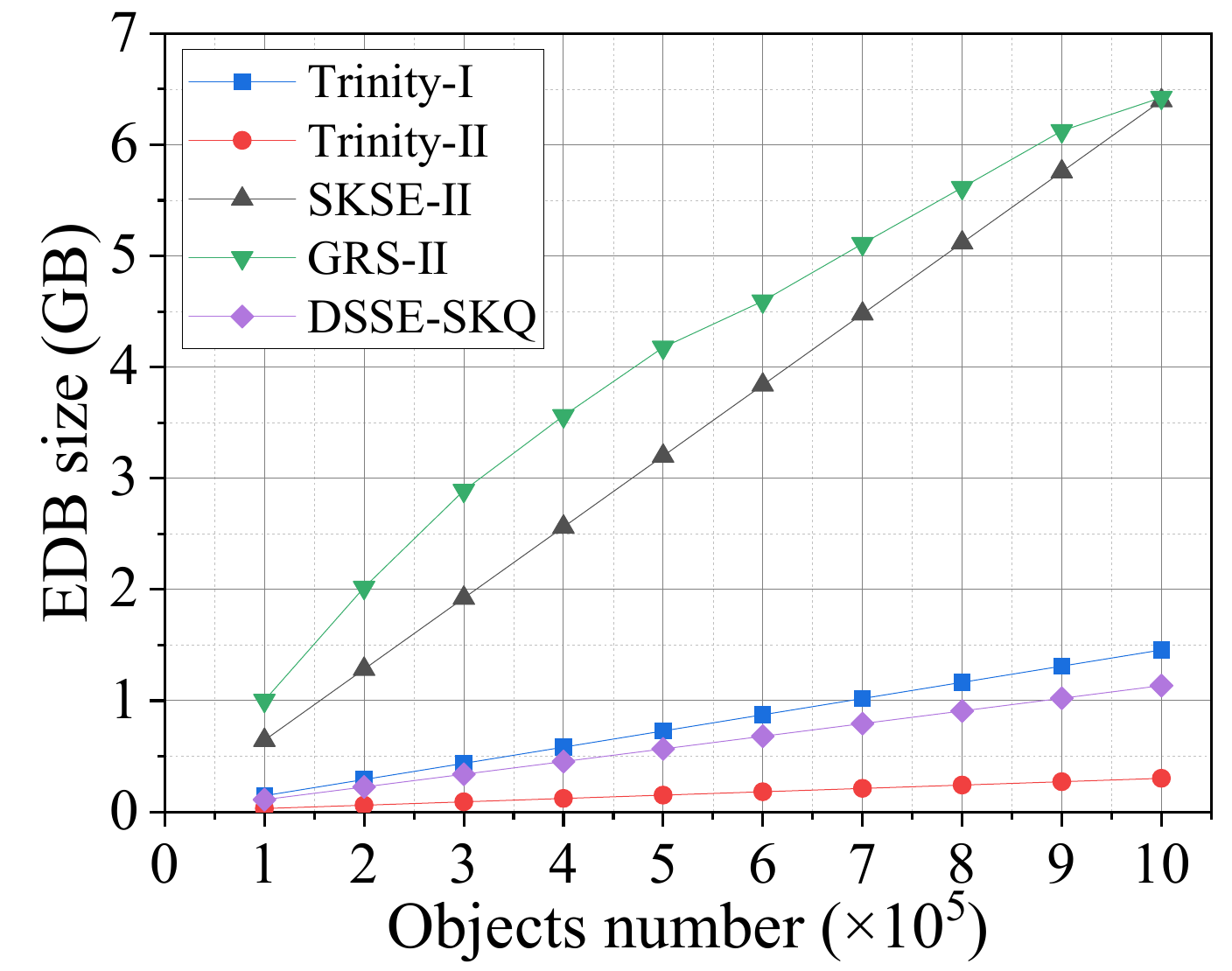}\label{fig:5_EDB} }

}
\caption{Trinity-\uppercase\expandafter{\romannumeral 1} vs Trinity-\uppercase\expandafter{\romannumeral 2},SKSE-\uppercase\expandafter{\romannumeral 2},GRS-\uppercase\expandafter{\romannumeral 2} and $\mathsf {DSSE}_{\mathsf {SKQ}}$ setup performance}
\label{fig:5_comp} 
\end{figure}

\subsection{Setup Performance}
\par \textbf{Setup latency.} The setup latency of Trinity-\uppercase\expandafter{\romannumeral 1} and Trinity-\uppercase\expandafter{\romannumeral 2} remains stable and increases linearly, as do the setup latency of GRS-\uppercase\expandafter{\romannumeral 2}, $\mathsf {DSSE}_{\mathsf {SKQ}}$, and SKSE-\uppercase\expandafter{\romannumeral 2}. For instance, as shown in Figure~\ref{fig:5_comp}\subref{fig:5_build}, Trinity-\uppercase\expandafter{\romannumeral 1} takes about 11 seconds to build an EDB for 100,000 data points, which is significantly less than the time taken by Trinity-\uppercase\expandafter{\romannumeral 2}, which takes 20 seconds. Meanwhile, SKSE-\uppercase\expandafter{\romannumeral 2}, GRS-\uppercase\expandafter{\romannumeral 2}, and $\mathsf {DSSE}_{\mathsf {SKQ}}$ currently take 26 seconds, 75 seconds, and 50 seconds respectively to construct an EDB of 100,000 data points. To ensure accuracy, each method was tested 100 times, with the average value serving as the final result.

\par \textbf{EDB size.}  The size of Trinity-\uppercase\expandafter{\romannumeral 1} and Trinity-\uppercase\expandafter{\romannumeral 2} expands proportionally to the number of data points, as does the size of SKSE-\uppercase\expandafter{\romannumeral 2}, GRS-\uppercase\expandafter{\romannumeral 2} and $\mathsf {DSSE}_{\mathsf {SKQ}}$. As illustrated in Figure~\ref{fig:5_comp}\subref{fig:5_EDB}, when handling 100,000 data points, Trinity-\uppercase\expandafter{\romannumeral 1} and $\mathsf {DSSE}_{\mathsf {SKQ}}$ occupy EDB sizes of 1.456 GB and 1.136 GB respectively, which are relatively similar and notably five times smaller than the EDB sizes consumed by GRS-\uppercase\expandafter{\romannumeral 2} (6.395 GB) and SKSE-\uppercase\expandafter{\romannumeral 2} (6.424 GB). However, Trinity-\uppercase\expandafter{\romannumeral 2} boasts a much more compact EDB size of 0.30146 GB due to its efficient roar bitmap architecture, which dramatically reduces storage cost.
\subsection{Search Performance}
\par \textbf{Token generation performance.} In evaluating the token generation cost of Trinity-\uppercase\expandafter{\romannumeral 1} and Trinity-\uppercase\expandafter{\romannumeral 2}, we compare their token generation times with those of Figure~\ref{fig:6_comp}\subref{fig:5_enc} for SKSE-\uppercase\expandafter{\romannumeral 2}, GRS-\uppercase\expandafter{\romannumeral 2}, and $\mathsf {DSSE}_{\mathsf {SKQ}}$. The evaluation tested different numbers of entries, ranging from 10,000 to 100,000 space-time nodes. Trinity-\uppercase\expandafter{\romannumeral 1} and Trinity-\uppercase\expandafter{\romannumeral 2} exhibit minimal differences in token generation cost, with respective times of 19.748 ms and 20.198 ms per token generation.
In terms of encryption methods, SKSE-\uppercase\expandafter{\romannumeral 2} employs HVE, while GRS-\uppercase\expandafter{\romannumeral 2} and $\mathsf {DSSE}_{\mathsf {SKQ}}$ utilize ASHE. Their corresponding token generation times are 92.024 ms, 3658.275 ms, and 2347.0956 ms, respectively. To ensure accuracy, each method was tested 100 times, with the average value serving as the final result.
\begin{figure}[htbp]
\centering{
\subfloat[Token generation latency]{\includegraphics[width=.52\columnwidth]{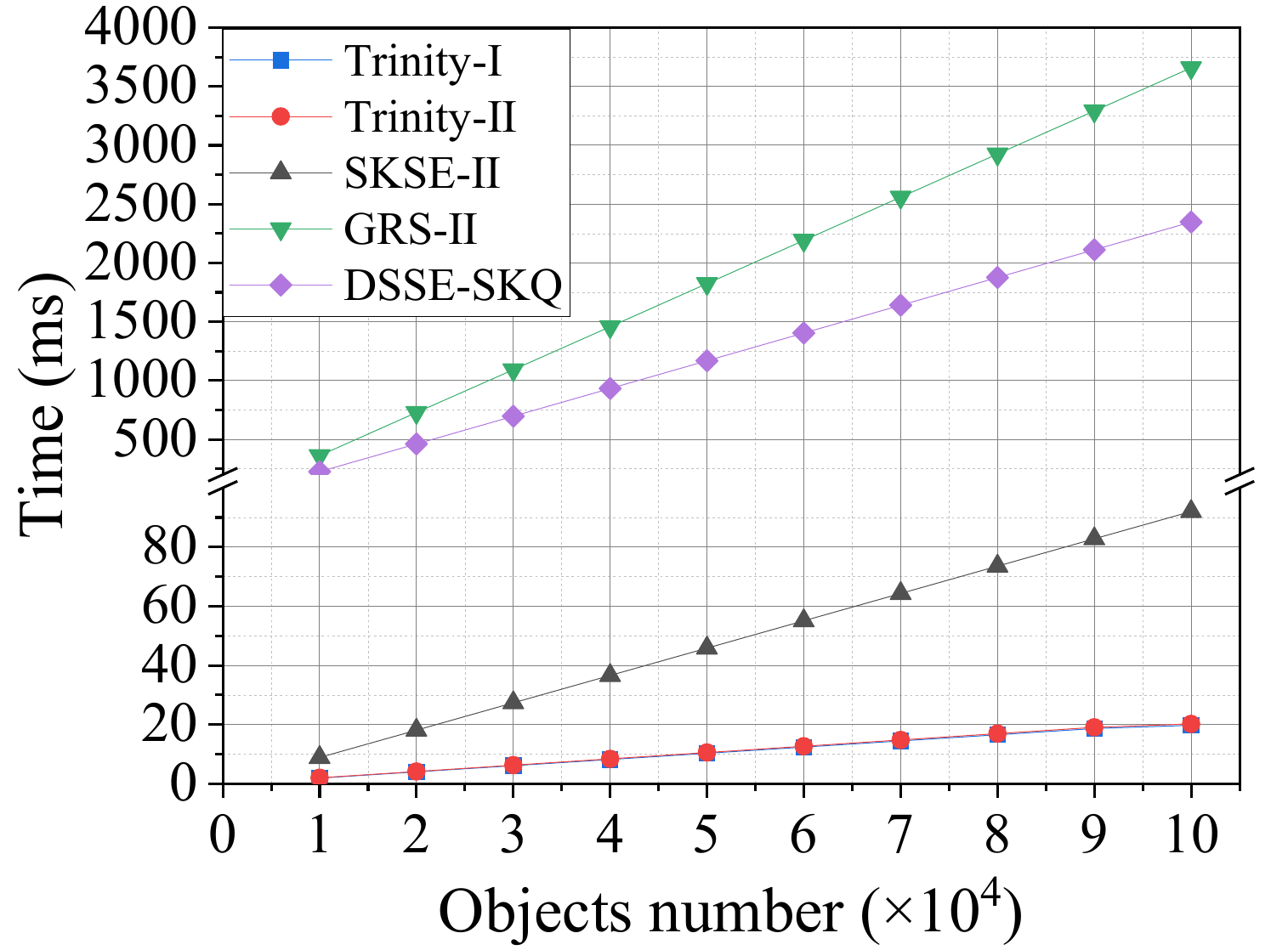}\label{fig:5_enc} }
\subfloat[Search latency]{\includegraphics[width=.47\columnwidth]{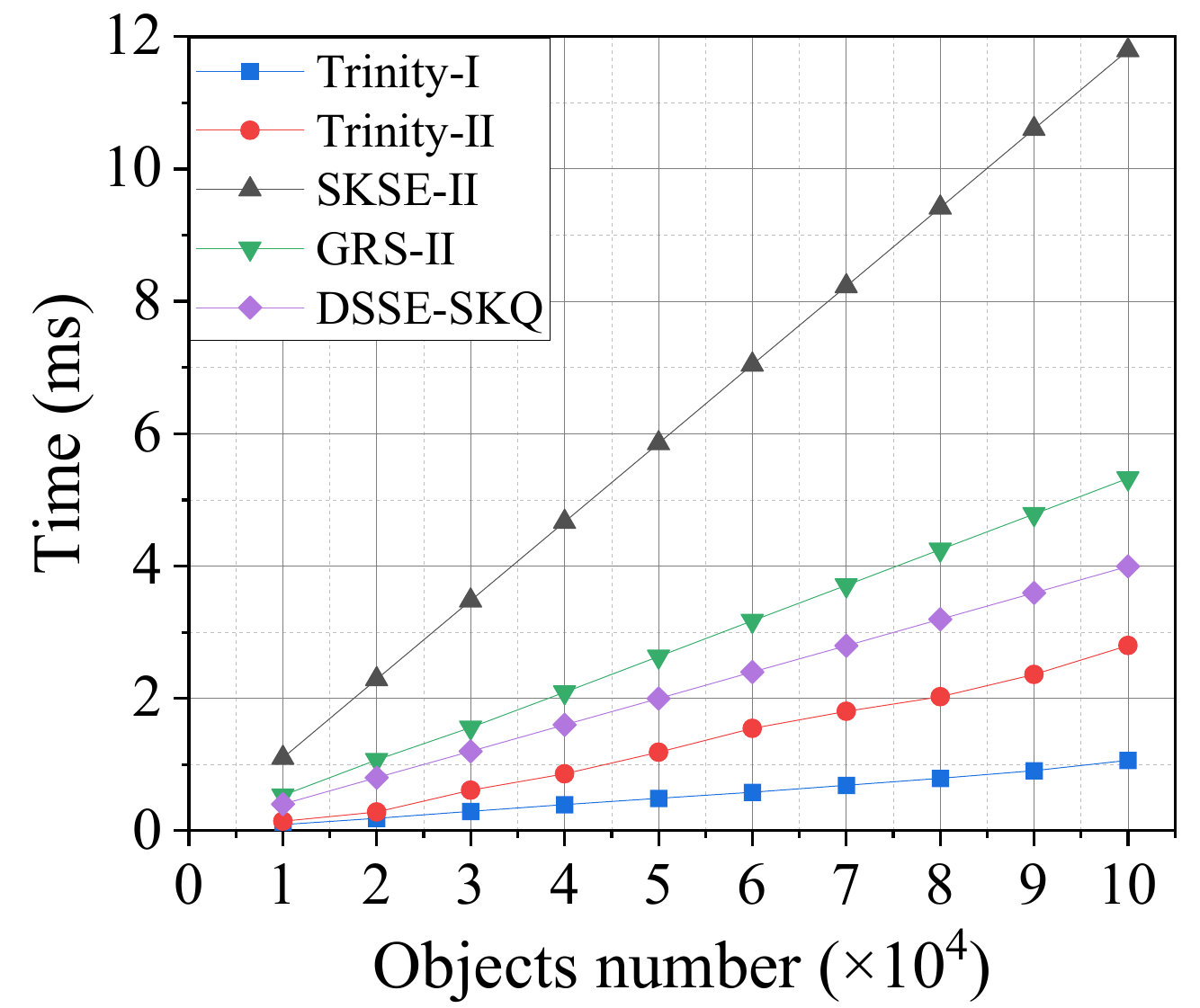}\label{fig:5_search} }
}
\caption{Trinity-\uppercase\expandafter{\romannumeral 1} vs Trinity-\uppercase\expandafter{\romannumeral 2},SKSE-\uppercase\expandafter{\romannumeral 2},GRS-\uppercase\expandafter{\romannumeral 2} and $\mathsf {DSSE}_{\mathsf {SKQ}}$ search performance}
\label{fig:6_comp} 
\end{figure}

\par \textbf{Search latency.} In evaluating the efficiency of Trinity-\uppercase\expandafter{\romannumeral 1} and Trinity-\uppercase\expandafter{\romannumeral 2}, we compare their search times to those of SKSE-\uppercase\expandafter{\romannumeral 2}, GRS-\uppercase\expandafter{\romannumeral 2}, and $\mathsf {DSSE}_{\mathsf {SKQ}}$, as depicted in Figure~\ref{fig:6_comp}\subref{fig:5_search}. The number of entries, which refers to space-time objects, ranges from 10,000 to 100,000 in increments of 10,000. The search range is set to 100 $m^2$ in 5 minutes. As the number of entries increases, the search time cost per entry in Trinity-\uppercase\expandafter{\romannumeral 2} increases once the number exceeds 10\% of its current capacity. This results in a change from a search time cost per entry of 14.262 ns when there are 10,000 entries to 27.994 ns when there are 100,000 entries.

In contrast, the average search time per entry for Trinity-\uppercase\expandafter{\romannumeral 1}, SKSE-\uppercase\expandafter{\romannumeral 2}, GRS-\uppercase\expandafter{\romannumeral 2}, and $\mathsf {DSSE}_{\mathsf {SKQ}}$ is respectively 10.62 ns, 110.1 ns, 53.162 ns, and 40.1 \textmu s. Consequently, it can be inferred that Trinity-\uppercase\expandafter{\romannumeral 1} outperforms Trinity-\uppercase\expandafter{\romannumeral 2}, being 2.63 times faster when there are 100,000 entries. Additionally, Trinity-\uppercase\expandafter{\romannumeral 2} is also 4.21 times faster than SKSE-\uppercase\expandafter{\romannumeral 2}, 1.9 times faster than GRS-\uppercase\expandafter{\romannumeral 2}, and 1.426 times faster than $\mathsf {DSSE}_{\mathsf {SKQ}}$.
\begin{figure}[htbp]
\centering{

\subfloat[Addition performance]{\includegraphics[width=0.50\columnwidth]{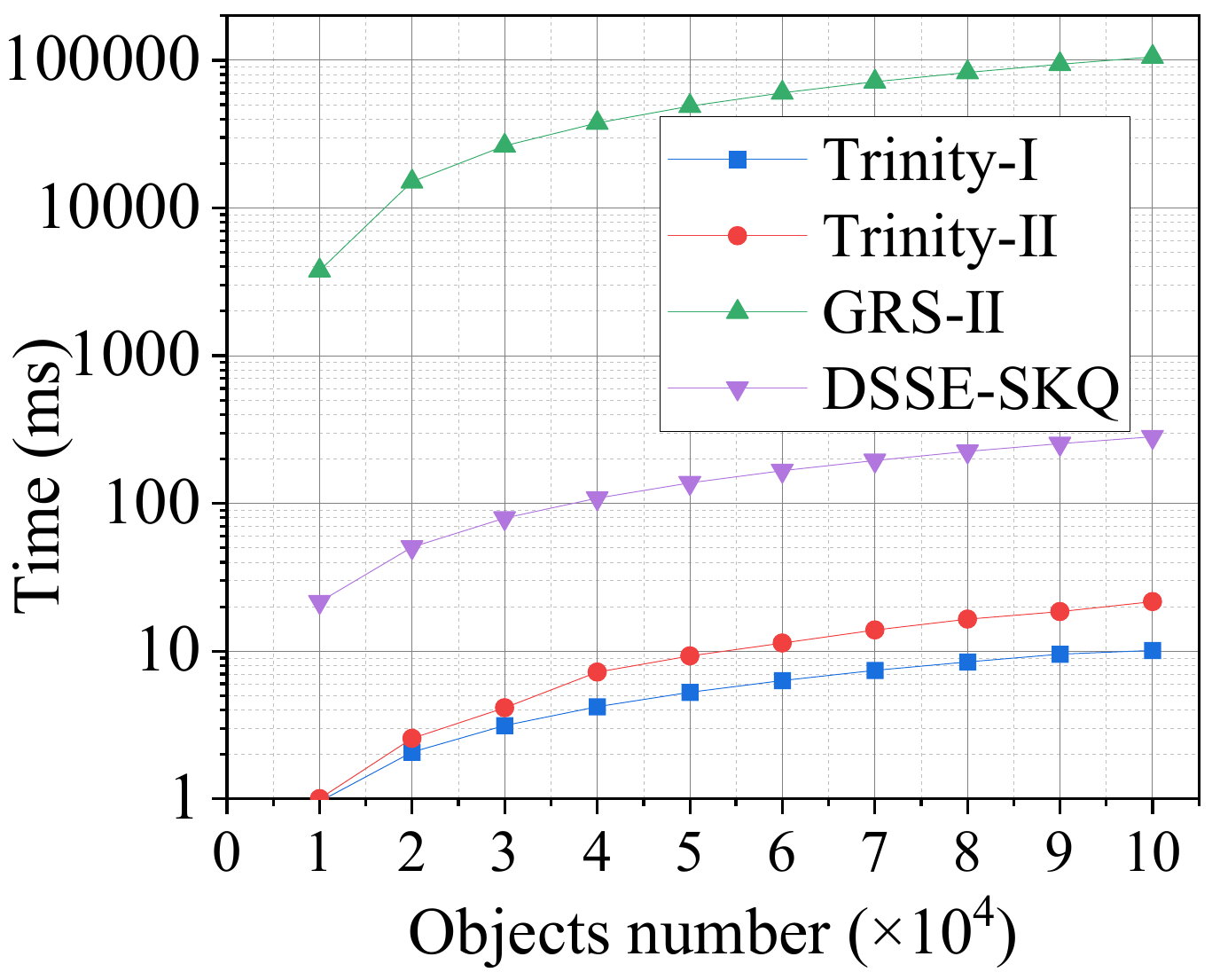}\label{fig:5_update} }
\subfloat[Deletion performance]{\includegraphics[width=0.50\columnwidth]{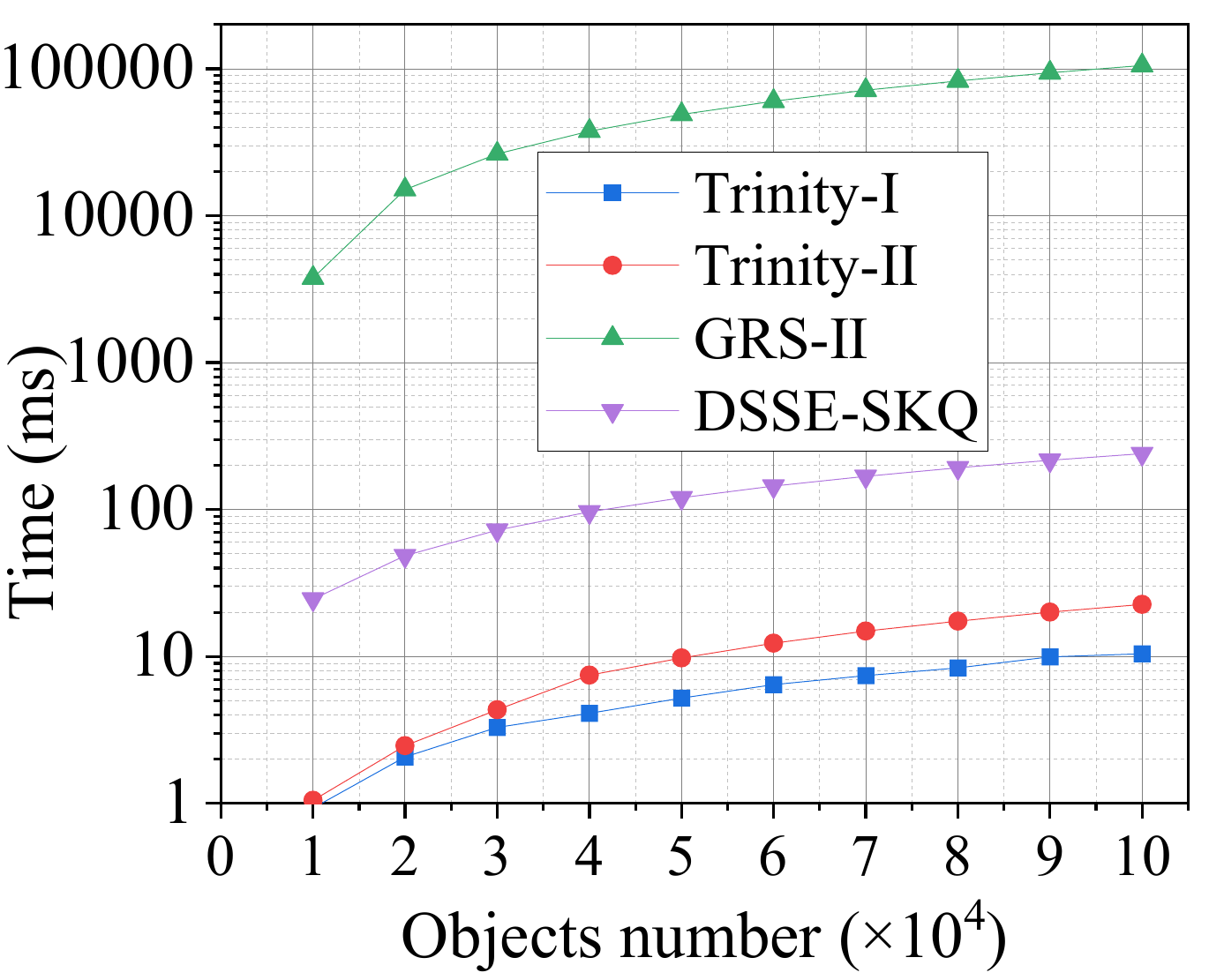}\label{fig:5_delete} }
}
\caption{Trinity-\uppercase\expandafter{\romannumeral 1} vs Trinity-\uppercase\expandafter{\romannumeral 2} , GRS-\uppercase\expandafter{\romannumeral 2} and $\mathsf {DSSE}_{\mathsf {SKQ}}$ update performance}
\label{fig:5_com} 
\end{figure}
\subsection{Update Performance}

\par \textbf{Addition performance.}  The addition time for Trinity-\uppercase\expandafter{\romannumeral 1}, Trinity-\uppercase\expandafter{\romannumeral 2}  and $\mathsf {DSSE}_{\mathsf {SKQ}}$ increases linearly, while for GRS-\uppercase\expandafter{\romannumeral 2} the addition time increases as a logarithmic function. However, as shown in Figure~\ref{fig:5_com}\subref{fig:5_update}, Trinity-\uppercase\expandafter{\romannumeral 1}, Trinity-\uppercase\expandafter{\romannumeral 2}  and $\mathsf {DSSE}_{\mathsf {SKQ}}$  at 100,000 points have an addition time of 10.09 ms, 21.59 ms and 282.77 ms respectively, which is negligible compared with the 105,301 ms of GRS-\uppercase\expandafter{\romannumeral 2}. This is because each addition in GRS-\uppercase\expandafter{\romannumeral 2} is also an update of the entire binary tree. Also, SKSE-\uppercase\expandafter{\romannumeral 2} will not be discussed here because it does not support dynamic updates.

\par \textbf{Deletion performance.} Similar to the add operation, the deletion time for Trinity-\uppercase\expandafter{\romannumeral 1}, Trinity-\uppercase\expandafter{\romannumeral 2}  and $\mathsf {DSSE}_{\mathsf {SKQ}}$ increases linearly, while for for GRS-\uppercase \expandafter{\romannumeral 2}, the deletion time increases as logarithmically. As shown in Figure~\ref{fig:5_com}\subref{fig:5_delete}, Trinity-\uppercase\expandafter{\romannumeral 1}, Trinity-\uppercase\expandafter{\romannumeral 2}  and $\mathsf {DSSE}_{\mathsf {SKQ}}$ have deletion times of 10.44 ms, 22.91 ms and 243.53 ms at 100,000 points, respectively, which are negligible compared to the 105,277 ms of GRS-\uppercase\expandafter{\romannumeral 2}. This is because each deletion in GRS-\uppercase\expandafter{\romannumeral 2} is also an update of the entire binary tree. Again, SKSE-\uppercase\expandafter{\romannumeral 2} will not be discussed here because it does not support dynamic updates.

\section{Conclusion}\label{sec:conclusion}

We propose a novel spatio-temporal data DSSE scheme \TrinityI that supports efficient dynamic updates, and automatic scalability. \TrinityI utilizes Hilbert curves and quotient filters to achieve spatio-temporal range query, implementing IND-SCPA security based on SHVE technology. Additionally, \TrinityI can perform millions of data retrievals within just a few milliseconds. Our solution simultaneously addresses the issues of low query efficiency, scalability challenges, and lack of deletion support, making it more suitable for tasks in spatio-temporal scenarios. But \TrinityI is storage expensive and vulnerable to file injection attacks. So we propose a \TrinityII to solve those questions. Our \TrinityII is forward-secure and storage-saving. Our \TrinityII saves 80\% storage cost than \TrinityI, and eliminate false positive by verification. However, the reduced storage cost also results in a smaller capacity for the QF, making hash collisions more likely. Consequently, compared to \TrinityI, \TrinityII exhibits increased query latency and update latency. However, experimental data consistently demonstrate that our proposed solution \TrinityII, outperforms existing solutions in terms of query efficiency, update efficiency, and storage cost. 

\TrinityII only provides forward security to protect against attack from \cite{zhang2016all,cash2015leakage} 
 caused by the add operation. While the leakage caused by the delete operation is not taken into account, and the backward security for the spatio-temporal data security retrieval scheme requires further improvement.
\section*{Acknowledgements}

This study was partially supported by the National Key R\&D Program of China (No.2022YFB4501000), the National Natural Science Foundation of China (No.62232010, 62302266, 62202364, U23A20302, U24A20244), Shandong Science Fund for Excellent Young Scholars (No.2023HWYQ-008), and Shandong Science Fund for Key Fundamental Research Project (ZR2022ZD02), the fellowship of China National Postdoctoral Program for Innovation Talents (No. BX20230279), the China Postdoctoral Science Foundation (No. 2024M752534), and the Key Research and Development Program of Shaanxi (No.2024GX-YBXM-075).

\ifCLASSOPTIONcaptionsoff
  \newpage
\fi

\begin{IEEEbiography}[{\includegraphics[width=1in,height=1.25in,clip,keepaspectratio]{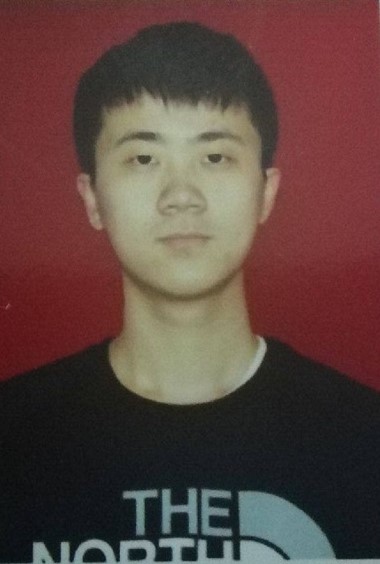}}]{Zhijun Li}
	received the B.E. degree in school of software engineering from Dalian University of Technology, Dalian, China, in 2018, and the Ph.D degree in Cyberspace security from Xidian University, Xi'an, China, in 2024. He is currently a postdoctoral researcher with the School of Computer Science and Technology, Shandong University, China. His current research interests include cloud computing security, edge computing security, and applied cryptography.
\end{IEEEbiography}
\vspace{-4em}
\begin{IEEEbiography}[{\includegraphics[width=1in,height=1.25in,clip,keepaspectratio]{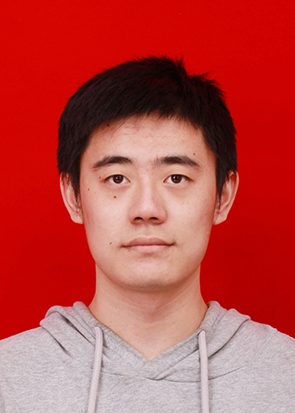}}]{Kuizhi Liu} received the B.S. degree from Nankai University, Tianjin, China, in 2021. He is currently a Ph.D. candidate in School of Cyber Engineering, Xidian University. His current research interests include cloud computing security, database security and data applied cryptography.
\end{IEEEbiography}
\vspace{-4em}
\begin{IEEEbiography}[{\includegraphics[width=1in,height=1.25in,clip,keepaspectratio]{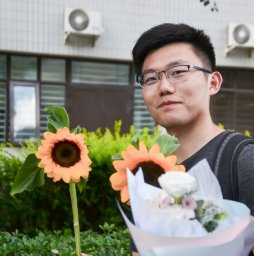}}]	{Minghui Xu (Member, IEEE)}
 received the BS degree in physics from Beijing Normal University, Beijing, China, in 2018, and the PhD degree in computer science from George Washington University, Washington DC, USA, in 2021. He is currently an associate professor with the School of Computer Science and Technology, Shandong University, China. His research focuses on blockchain, distributed computing, and applied cryptography.
\end{IEEEbiography}
\vspace{-4em}
\begin{IEEEbiography}[{\includegraphics[width=1in,height=1.25in,clip,keepaspectratio]{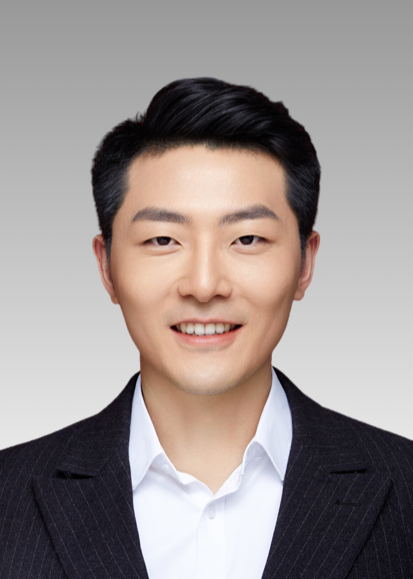}}]	{Xiangyu Wang}
received the B.E. degree and the Ph.D. degree from Xidian University, Xi'an, China, in 2017 and 2021, respectively. He is currently an associate professor with School of Cyber Engineering, Xidian University, Xi'an, China. He received the Outstanding Doctoral Dissertation Award from the China Institute of Communications in 2022. His research interests include big data security, cloud security, and applied cryptography.
\end{IEEEbiography}
\vspace{-4em}
\begin{IEEEbiography}[{\includegraphics[width=1in,height=1.25in,clip,keepaspectratio]{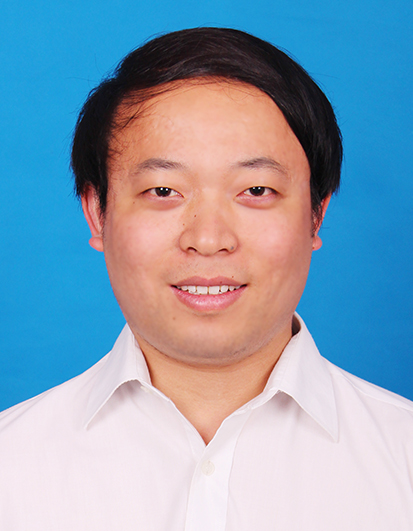}}]{Yinbin Miao}
	(Member, IEEE) received the Ph.D. degree from the Department of Telecommunication Engineering, Xidian University, Xi’an, China, in 2016. He was a Post-Doctoral Fellow with Nanyang Technological University, Singapore, from 2018 to 2019. He is currently a Lecturer with the Department of Cyber Engineering, Xidian University. His current research interests include information security and applied cryptography.
\end{IEEEbiography}
\vspace{-4em}
\begin{IEEEbiography}[{\includegraphics[width=1in,height=1.25in,clip,keepaspectratio]{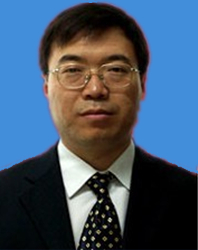}}]{Jianfeng Ma}
	(Member, IEEE) received the Ph.D. degree in computer software and telecommunication engineering from Xidian University, Xi’an, China, in 1995. He was a Research Fellow with Nanyang Technological University, Singapore, from 1999 to 2001. He is currently a Professor and a Ph.D. Supervisor with the Department of Cyber Engineering, Xidian University. His current research interests include information and network security, wireless and mobile computing systems, and computer networks.
\end{IEEEbiography}
\vspace{-4em}
\begin{IEEEbiography}[{\includegraphics[width=1in,height=1.25in,clip,keepaspectratio]{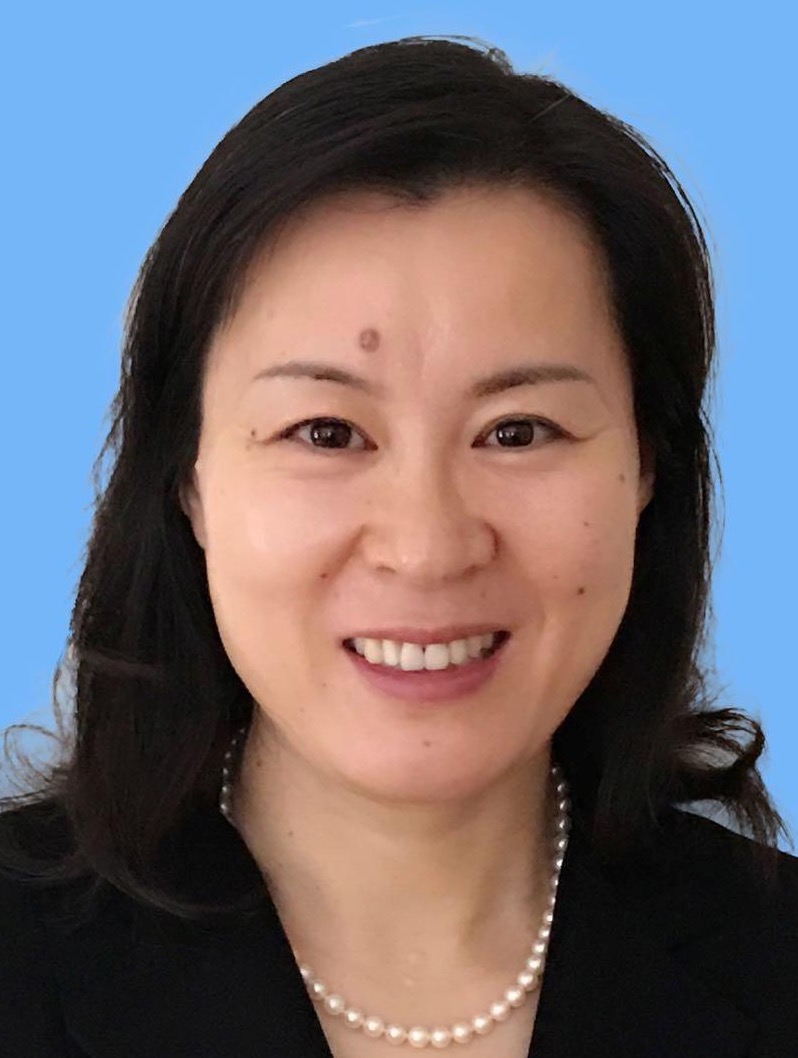}}]{Xiuzhen Cheng}
	(Fellow, IEEE) Xiuzhen Cheng (Fellow, IEEE) received the MS and PhD degrees in computer science from the University of Minnesota – Twin Cities, in 2000 and 2002, respectively. She is a professor with the School of Computer Science and Technology, Shandong University. Her current research interests include wireless and mobile security, cyber physical systems, wireless and mobile computing, sensor networking, and algorithm design and analysis. She has served on the editorial boards of several technical journals and the technical program committees of various professional conferences/workshops. She also has chaired several international conferences. She worked as a program director for the US National Science Foundation (NSF) from April to October in 2006 (full time), and from April 2008 to May 2010 (part time). She received the NSF CAREER Award in 2004. She is a member of ACM.
\end{IEEEbiography}
\vspace{-4em}

%



\end{document}